\newcommand{\K}{\mathbb{K}}
\newcommand{\PP}{\mathbb{P}}
\newcommand{\Hilb}[2]{\textnormal{\bf Hilb}^{#1}_{#2}}
\newcommand{\Proj}{\textnormal{Proj}\,}
\newcommand{\sat}{\textnormal{sat}}
\newcommand{\reg}{\textnormal{reg}}
\newcommand{\up}[1]{\textnormal{\textsf{e}}^{+}_{#1}}
\newcommand{\down}[1]{\textnormal{\textsf{e}}^{-}_{#1}}
\newcommand{\pos}[2]{\mathcal{P}({#1},{#2})}
\newcommand{\restrict}[2]{#1_{(\geqslant{#2})}}
\newcommand{\DegLex}{\textnormal{\texttt{DegLex}}}
\newtheorem{theorem}{Theorem}[section]
\newtheorem{proposition}[theorem]{Proposition}
\newtheorem{lemma}[theorem]{Lemma}
\newtheorem{definition}[theorem]{Definition}
\newtheorem{example}{Example}[section]
\begin{document}
\title[An efficient implementation of the algorithm computing the Borel-fixed points of $\Hilb{n}{p(t)}$]{An efficient implementation of the algorithm computing the Borel-fixed points of a Hilbert scheme}

\author[P.~Lella]{Paolo Lella}
\address{Dipartimento di Matematica dell'Universit\`{a} di Torino\\
         Via Carlo Alberto 10,
         10123 Torino, Italy}
\email{\href{mailto:paolo.lella@unito.it}{paolo.lella@unito.it}}
\urladdr{\href{http://www.personalweb.unito.it/paolo.lella/}{www.personalweb.unito.it/paolo.lella/}}

\subjclass[2010]{14C05, 05E40, 13P99}
\keywords{Borel-fixed ideals, Hilbert scheme, Hilbert polynomial}

\begin{abstract}
Borel-fixed ideals play a key role in the study of Hilbert schemes. Indeed each component and each intersection of components of a Hilbert scheme contains at least one Borel-fixed point, i.e. a point corresponding to a subscheme defined by a Borel-fixed ideal. Moreover Borel-fixed ideals have good combinatorial properties, which make them very interesting in an algorithmic perspective. In this paper, we propose an implementation of the algorithm computing all the saturated Borel-fixed ideals with number of variables and Hilbert polynomial assigned, introduced from a theoretical point of view in the paper \lq\lq Segment ideals and Hilbert schemes of points\rq\rq, Discrete Mathematics 311 (2011).
\end{abstract}

\maketitle

\section{Introduction}
The Hilbert scheme $\Hilb{n}{p(t)}$ parametrizes all the subschemes and all the families of subschemes of the projective space $\PP^n$ with Hilbert polynomial $p(t)$. Borel-fixed ideals are a basic tool for the direct study of Hilbert schemes, because 
\begin{itemize}
\item each component and each intersection of components of a Hilbert scheme contains at least one point defined by a Borel-fixed ideals;
\item the Borel-fixed ideals have a strong combinatorial property which makes them very convenient also from an algorithmic perspective.
\end{itemize}
 For instance, Hartshorne in his thesis \cite{HartshorneThesis} proved the connectedness of Hilbert scheme constructing sequences of deformations of Borel-fixed ideals (he called them \emph{balanced} ideals) which lead from any point of the Hilbert scheme to the point determined by the unique saturated lexicographic associated to the Hilbert polynomial $p(t)$.

More recently, many papers \cite{BCLR,BLR,BLMR,CLMR,CioffiRoggero,LellaDeformations,LellaRoggero} by Bertone, Cioffi, Marinari, Roggero and the author showed how to use Borel-fixed ideals for a local study of the Hilbert scheme, mainly constructing families of ideals sharing with a fixed Borel-fixed ideals the same basis of the quotient space.

Therefore it was very important to have an algorithm computing for each Hilbert scheme $\Hilb{n}{p(t)}$ all the points defined by Borel-fixed ideals.  An algorithm for computing all the Borel-fixed ideals in $\K[x_0,\ldots,x_n]$ with Hilbert polynomial $p(t)$ based on a combinatorial approach has been proposed in \cite{CLMR} and another algorithm with 
slight differences to this, and published later, is presented in \cite{MooreNagel}.

In this paper, we describe a concrete implementation of the algorithm, that turns out to be very efficient as we will show with an experimental analysis.

\section{Notation and general setting}

We will consider a field $\K$ of characteristic 0 and for any polynomial ring $\K[x] := \K[x_0,\ldots,x_n]$ we will order the variables as $x_n > \ldots > x_0$. Following the notation of \cite{Gotzmann}, we will refer to the Hilbert polynomial of a homogeneous ideal $I \subset \K[x_0,\ldots,x_n]$ as the Hilbert polynomial of the graded module $\K[x_0,\ldots,x_n]/I$, i.e.
\[
p(t) = \dim_\K \K[x_0,\ldots,x_n]_t/I_t,\qquad t \gg 0.
\]

A homogeneous ideal $I$ is said Borel-fixed if it is fixed by the action of the Borel subgroup of upper triangular matrices. Looking at the action of the elements of the Borel subgroup $\mathrm{Id}+E_{i,j}$, where $i<j$ and $E_{i,j}$ is a matrix with all entries equal to 0 except for the entry of the $i$-th row and $j$-h column equal to 1, it is possible to prove that the ideal $I$ has a nice combinatorial property: in fact an ideal $I$ is Borel-fixed if and only if it is a monomial ideal and
\[
x^\alpha \in I\quad \Longrightarrow\quad \frac{x_j}{x_i} x^\alpha \in I,\ \forall\ x_i \mid x^\alpha,\ x_j > x_i.
\]
As done in \cite{CLMR,LellaDeformations}, we define the elementary Borel moves: 
\begin{itemize}
\item $\down{j}$ as the element $\frac{x_{j-1}}{x_j}$ in the field of fraction $\K(x)$ of $\K[x]$;
\item $\up{i}$ as the element $\frac{x_{i+1}}{x_i} \in \K(x)$;
\end{itemize}
and for any monomial $x^\alpha$ we will say that $\down{i}$ (resp. $\up{j}$) is admissible on $x^\alpha$ if $\down{i}(x^\alpha)=\frac{x_{i-1}}{x_i} x^\alpha \in \K[x]$ (resp. $\up{j}(x^\alpha)=\frac{x_{j+1}}{x_j} x^\alpha \in \K[x]$). We will use additive notation to denote the composition of an elementary move with itself, for instance $2\down{j} = \down{j}\circ \down{j} = \left(\frac{x_{j-1}}{x_j}\right)^2$.

$\pos{n}{r}$ will denote the poset of all monomials in $\K[x]_r$ with the Borel partial order $\leq_B$ given by the transitive closure of the order relations
\[
\up{i}(x^\alpha) >_B x^\alpha >_B \down{j}(x^\alpha)
\]
$\forall\ x^\alpha \in \K[x]_r,\ \forall\ \up{i},\down{j}$ admissible, and we call Borel set any subset of a poset $\pos{n}{r}$ closed under increasing elementary moves. By definition, given a Borel-fixed ideal $I \subset \K[x]$, the monomial basis of each homogeneous piece $I_t$ of fixed degree will define a Borel set, thus we will write $\{I_t\}$ meaning the Borel set of $\pos{n}{t}$ defined by the monomials in $I_t$. On the other hand, given a Borel set $\mathscr{B} \subset \pos{n}{r}$, we will denote by $\langle\mathscr{B}\rangle$ the Borel-fixed ideal generated by the monomial of $\mathscr{B}$ and by $\langle\mathscr{B}\rangle^\sat$ its saturation.

For any monomial $x^\alpha$ we denote 
\begin{itemize}
\item $\max x^\alpha = \max \{x_i \text{ s.t. } x_i \mid x^\alpha\}$;
\item $\min x^\alpha = \min \{x_j \text{ s.t. } x_j \mid x^\alpha\}$.
\end{itemize}
For the order chosen on the variables, it will not be misleading to write only the index of the variable instead of the variable itself.

For any Borel set $\mathscr{B} \subset \pos{n}{r}$ we will denote by $\mathscr{B}^\mathcal{C}$ the complement set $\pos{n}{r}\setminus\mathscr{B}$. Obviously $\mathscr{B}^{\mathcal{C}}$ is closed under decreasing elementary moves and we will call such a set \emph{order set}, being its dehomogeneization (imposing $x_0=1$) an order ideal.

For any subset $S$ of $\pos{n}{r}$, $\restrict{S}{i}$ will denote the subset of $S$ of the monomials with minimum greater than or equal to $i$:
\[
\restrict{S}{i} := \left\{ x^\alpha \in S\ \vert\ \min	x^\alpha \geqslant i\right\}.
\]
Obviously $\restrict{S}{0} = S$.

As well known, given any Borel-fixed ideal $I$, its saturation $I^\sat$ is generated by the monomials obtained substituting $x_0=1$ in the monomials that generate $I$. Moreover, $x_0$ is not a nihilpotent element for any Borel-fixed ideals.

\begin{proposition}\label{prop:hyperplaneSection}
Let $I \subset \K[x]$ be a Borel-fixed ideal. The linear form $x_0$ is regular for $I$. Thus, for any shift of degree $t$, there is the short exact sequence induced by the multiplication by $x_0$
\begin{equation}\label{eq:hyperplaneSection}
0\longrightarrow\ \dfrac{\K[x]}{I}(t-1) \stackrel{\cdot x_0}{\longrightarrow} \dfrac{\K[x]}{I}(t) \longrightarrow\dfrac{\K[x]}{(I,x_0)}(t) \longrightarrow 0.
\end{equation}
\end{proposition}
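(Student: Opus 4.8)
The plan is to prove regularity of $x_0$ directly from the monomial combinatorics of $I$, and then read off the exact sequence formally. Since $I$ is a monomial ideal, the classes of the monomials not lying in $I$ form a $\K$-basis of $\K[x]/I$, and multiplication by $x_0$ sends the class of a monomial $x^\alpha \notin I$ either to the class of $x_0 x^\alpha$ or to $0$ (precisely when $x_0 x^\alpha \in I$). As $x^\alpha \mapsto x_0 x^\alpha$ is injective on monomials, the linear map $\cdot x_0$ is injective if and only if no basis vector is killed, that is if and only if
\[
x^\alpha \notin I \ \Longrightarrow\ x_0 x^\alpha \notin I.
\]
So the whole statement reduces to this implication, which I would prove by contraposition.

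Assume $x_0 x^\alpha \in I$. Being a monomial ideal, $I$ contains $x_0 x^\alpha$ only if some minimal generator $x^\beta$ of $I$ divides $x_0 x^\alpha$. Comparing exponents, $x^\beta \mid x_0 x^\alpha$ forces $\beta_i \leqslant \alpha_i$ for every $i \geqslant 1$ together with $\beta_0 \leqslant \alpha_0 +1$. If $x_0 \nmid x^\beta$, i.e. $\beta_0 = 0$, then already $x^\beta \mid x^\alpha$ and therefore $x^\alpha \in I$. Thus the only way to keep $x^\alpha \notin I$ would be $\beta_0 = \alpha_0 + 1 \geqslant 1$, that is $x_0 \mid x^\beta$.

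The heart of the argument is to exclude this last possibility, and this is exactly where saturation enters: a saturated Borel-fixed ideal has no minimal generator divisible by $x_0$. Indeed, by the description of the saturation recalled before the statement, the minimal generators of $I^\sat$ are obtained by setting $x_0 = 1$ in those of $I$, hence are $x_0$-free; since $I = I^\sat$, its minimal generators are $x_0$-free as well. Equivalently, and more in keeping with the elementary moves: if $x^\beta$ were a minimal generator with $x_0 \mid x^\beta$, then $x^\beta/x_0 \notin I$ by minimality, while the admissible moves $x_0 \mapsto x_j$ give $x_j\,(x^\beta/x_0) \in I$ for all $j \geqslant 1$ and also $x_0\,(x^\beta/x_0)=x^\beta \in I$; hence $\mathfrak{m}\cdot(x^\beta/x_0) \subseteq I$ and $x^\beta/x_0 \in I^\sat \setminus I$, contradicting $I = I^\sat$. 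This contradiction shows $x^\alpha \in I$, so $x_0$ is regular. I expect this to be the only real obstacle: the reduction above and the conclusion are formal, whereas here one must genuinely use that no minimal generator involves the smallest variable. This property fails for non-saturated Borel-fixed ideals, such as powers of the irrelevant ideal $\mathfrak{m}$, and it is precisely what saturation supplies.

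Finally, regularity yields the short exact sequence at once. Multiplication by the degree-one form $x_0$ is injective on $\K[x]/I$ and raises degrees by one, which accounts for the shift $\K[x]/I(t-1) \to \K[x]/I(t)$; its cokernel in degree $t$ is by definition $\big(\K[x]/(I,x_0)\big)_t$. Assembling these maps in every degree produces the graded short exact sequence \eqref{eq:hyperplaneSection}.
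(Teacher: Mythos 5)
Your proof is correct, but there is nothing in the paper to compare it with: Proposition~\ref{prop:hyperplaneSection} is stated as recalled background material and no proof is given, so your argument fills a gap rather than paralleling anything. On its own terms it is sound. The reduction of regularity to the purely combinatorial implication $x^\alpha \notin I \Rightarrow x_0 x^\alpha \notin I$ is exactly right for monomial ideals, and both of your justifications of the key fact that a saturated Borel-fixed ideal has no minimal generator divisible by $x_0$ are valid; the second one (if $x^\beta$ were such a generator, the Borel moves give $\mathfrak{m}\cdot(x^\beta/x_0) \subseteq I$, hence $x^\beta/x_0 \in (I:\mathfrak{m})\setminus I$, contradicting saturation) is self-contained and invokes the Borel-fixed hypothesis exactly where it must enter.

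The point you flag yourself deserves emphasis, since it is a genuine inaccuracy in the statement rather than in your proof: as printed, the proposition asserts regularity of $x_0$ for an arbitrary Borel-fixed ideal, and this is false --- for $I=\mathfrak{m}^k$ one has $x_1^{k-1}\notin I$ but $x_0x_1^{k-1}\in I$. Indeed, for a proper Borel-fixed ideal regularity of $x_0$ is \emph{equivalent} to saturation: if $I\subsetneq I^\sat$ then $(I:\mathfrak{m})\setminus I$ is nonempty and every element there is annihilated by $x_0$ modulo $I$. So the hypothesis $I=I^\sat$ that your proof uses must be added to the statement; it is not a defect of your argument. This costs the paper nothing: the exact sequence \eqref{eq:hyperplaneSection} is only ever applied to saturated Borel-fixed ideals (as in the proof of Theorem~\ref{cor:bijectionAllIdeals}), and for a non-saturated Borel-fixed ideal one still has exactness in all sufficiently large degrees --- the kernel $(I:x_0)/I$ sits inside $I^\sat/I$, which has finite length --- which is all that the Hilbert-polynomial conclusion $\Delta p(t) = p(t)-p(t-1)$ for $(I,x_0)$ actually requires.
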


This regular sequence says that the ideal $(I,x_0)$ has Hilbert polynomial $\Delta p(t) = p(t) - p(t-1)$. We will perform hyperplane section repeatedly so we define $\Delta^0 p(t) := p(t)$ and recursively $\Delta^k p(t) := \Delta^{k-1}p(t) - \Delta^{k-1}p(t-1)$. It is very easy to manipulate Hilbert polynomial considering the Gotzmann decomposition.
\begin{definition}\index{Gotzmann number}
An admissible Hilbert polynomial $p(t)$ has a unique \emph{Gotzmann decomposition}
\begin{equation}\label{eq:GotzmannDecomposition}
p(t) = \binom{t+a_1}{a_1}  + \ldots + \binom{t+a_r - (r-1)}{a_r},
\end{equation}
$a_1 \geqslant \ldots \geqslant a_r$. The number $r$ of terms in this sum is said \emph{Gotzmann number} of $p(t)$.
\end{definition}

It can be easily proved that the Gotzmann decomposition of $\Delta p(t)$ can be obtained from the decomposition of $p(t)$ decreasing by 1 each index $a_i$ and discarding the binomial coefficients with a negative index below. Moreover we define an inverse operator $\Sigma$ that associates to $p(t)$ the Hilbert polynomial obtained increasing by 1 all the indices $a_i$.

\begin{example}
The Hilbert polynomial $p(t) = 3t+1$ has Gotzmann decomposition
\small
\[
\binom{t+1}{1} + \binom{t+1-1}{1} + \binom{t+1-2}{1} + \binom{t-3}{0}.
\]
\normalsize
Hence
\small
\[
\begin{split}
\Delta p(t) &{}= \binom{t}{0} + \binom{t-1}{0} + \binom{t-2}{0} = 3\\
\Sigma p(t) &{} = \binom{t+2}{2} + \binom{t+2-1}{2} + \binom{t+2-2}{2} + \binom{t+1-3}{1} = \\
&{} = \frac{3}{2}t^2 + \frac{5}{2}t -1.
\end{split}
\]
\normalsize
\end{example}

We conclude the background materials with the properties linking the Hilbert polynomial to the regularity of an ideal.
\begin{definition}
A coherent sheaf $\mathcal{F}$ over $\PP^n$ is said $m$-regular if for every $i > 0$
\begin{equation}
H^i\big(\mathcal{F}(m-i)\big) = 0.
\end{equation}
The Castelnuovo-Mumford regularity of $\mathcal{F}$ is the smallest integer $m$, for which $\mathcal{F}$ is $m$-regular.
\end{definition}

\begin{theorem}[\textbf{Gotzmann's Regularity Theorem}]\label{th:RegularityTheorem}Let $A$ be any $\K$-algebra and let $Z \subset \Proj A[x_0,\ldots,x_n]$ be any subscheme with Hilbert polynomial $p(t)$, whose Gotzmann number is $r$. Then the sheaf of ideals $\mathcal{I}_{Z}$ is $r$-regular.
\end{theorem}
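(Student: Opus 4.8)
The plan is to translate the cohomological statement into a combinatorial bound on a single monomial ideal, and to reduce the base to a field. Recall that $r$-regularity of $\mathcal{I}_Z$ means $H^i\big(\mathcal{I}_Z(r-i)\big)=0$ for all $i>0$; over a field this is equivalent to the saturated defining ideal $I=I_Z\subset\K[x]$ being generated in degrees $\leqslant r$ together with $\dim_\K(\K[x]/I)_t = p(t)$ for every $t\geqslant r$. Since the bound $r$ will depend only on $p(t)$, I would first establish everything for $A=\K$ (which I may take algebraically closed, regularity being insensitive to field extension) and postpone the passage to an arbitrary $\K$-algebra to the very end. Replacing $I_Z$ by its saturation changes neither the sheaf nor its regularity, so I assume $I$ saturated.

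Over $\K$ my main tool would be the generic initial ideal $J=\GIN_{\RevLex}(I)$. Two classical facts do most of the work: by Bayer--Stillman $\reg I = \reg J$, so it suffices to bound $\reg J$; and, since $\mathrm{char}\,\K=0$, the ideal $J$ is Borel-fixed, so by the Eliahou--Kervaire resolution its regularity equals the largest degree of a minimal generator. Thus the whole theorem reduces to the assertion that a saturated Borel-fixed ideal with Hilbert polynomial $p(t)$ is generated in degrees at most $r$, with its Hilbert function already equal to $p(t)$ from degree $r$ on.

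For this last, combinatorial, step I would compare $J$ with the saturated lexicographic ideal $L$ attached to $p(t)$. Macaulay's growth estimate, together with Gotzmann's persistence theorem, controls how fast the Hilbert function of any ideal with this Hilbert polynomial can grow and forces it into the minimal-growth regime of $L$; among ideals with a fixed Hilbert function the lex ideal moreover maximizes the graded Betti numbers (Bigatti--Hulett--Pardue), hence the regularity. So it is enough to show $\reg L = r$, which I would read off directly from the Gotzmann decomposition \eqref{eq:GotzmannDecomposition}: building $L$ from the data $a_1\geqslant\cdots\geqslant a_r$ produces exactly $r$ ``corners'', with its top generator sitting in degree $r$. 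The iterated hyperplane section of Proposition~\ref{prop:hyperplaneSection}, which replaces $p(t)$ by $\Delta p(t)$ and strips the terms with $a_i=0$ from the decomposition, supplies a convenient induction on $\dim Z$ for this computation. I expect the equality $\reg L = r$, and the careful bookkeeping that matches ``same Hilbert polynomial'' with ``same Hilbert function'' in the persistence step, to be the real heart---and the main obstacle---of the argument.

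Finally, for an arbitrary $\K$-algebra $A$ I would pass to fibers. Each fiber of $Z$ has Hilbert polynomial $p(t)$, so the field case already yields the uniform bound $\reg \mathcal{I}_{Z_{\mathfrak{p}}} \leqslant r$ over every point $\mathfrak{p}$; the constancy of the Hilbert polynomial provides the flatness needed for cohomology and base change, which then propagates the vanishing $H^i\big(\mathcal{I}_Z(r-i)\big)=0$ from the fibers to $\Proj A[x_0,\ldots,x_n]$. The subtlety here is precisely that Castelnuovo--Mumford regularity is only upper semicontinuous, so a bound on a single (say, generic) fiber would not suffice; it is essential that the field-case bound $r$ is the same for all fibers, which is exactly why the computation of $\reg L$ must be carried out uniformly in terms of $p(t)$ alone.
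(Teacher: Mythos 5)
The paper itself offers no proof of this statement: it is quoted as classical background (Gotzmann's theorem, with pointers to \cite{Gotzmann} and \cite{GreenGIN}), so your attempt can only be judged against the standard literature arguments --- whose overall architecture you do follow (reduce to a field, pass to the generic initial ideal, use that in characteristic $0$ a Borel-fixed ideal's regularity equals its maximal generator degree, compare with the lexicographic ideal, then spread the uniform bound over $\Spec A$ by cohomology and base change).

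However, there is a genuine gap exactly where you place the weight of the proof. The pivotal inequality $\reg J \leqslant \reg L$, for $J = \GIN_{\RevLex}(I)$ and $L$ the saturated lex ideal attached to $p(t)$, is never established: the Bigatti--Hulett--Pardue theorem says the lex ideal maximizes graded Betti numbers among ideals with the same Hilbert \emph{function}, whereas your $J$ and $L$ share only the Hilbert \emph{polynomial}, and their Hilbert functions differ in general (already for ideals of finite sets of points in special position). So BHP does not apply, and the devices you invoke to bridge the discrepancy --- Macaulay growth and Gotzmann persistence --- are left too vague to do the work; worse, the natural way to force ``Hilbert function $=$ Hilbert polynomial in degrees $\geqslant r$'' is circular, since that agreement is (essentially) a consequence of the $r$-regularity being proved. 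You flag this step yourself as ``the real heart---and the main obstacle,'' which is an accurate self-diagnosis: what is missing is precisely the content of Gotzmann's theorem, not a routine verification. The way the literature closes it (e.g.\ Green's lectures \cite{GreenGIN}) is a direct combinatorial induction on the number of variables for saturated Borel-fixed ideals, using the regular hyperplane section $x_0$ (cf.\ Proposition \ref{prop:hyperplaneSection}) and the behavior of the Gotzmann decomposition under $\Delta$, rather than any Betti-number comparison with the lex ideal. (Your final base-change paragraph is also only a sketch --- flatness of the ideal sheaf, cohomology and base change, and a direct-limit argument for non-Noetherian $A$ all need to be supplied --- but that part is standard; the combinatorial core is the real defect.)
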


\begin{proposition}[{\cite[Proposition 2.6]{GreenGIN}}]\label{prop:regularityVsCMregularity}
Let $I$ be a saturated ideal in $\K[x]$. The regularity of $I$ is equal to the Castelnuovo-Mumford regularity of the sheaf of ideals $\mathcal{I}$ obtained from the sheafification of $I$.
\end{proposition}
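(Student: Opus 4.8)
The plan is to compare the two invariants through their cohomological descriptions and to isolate the two homological degrees at which they can differ, which turns out to be exactly where the saturation hypothesis bites. First I would recall the local-cohomology characterization of the algebraic regularity: writing $\mathfrak{m}=(x_0,\ldots,x_n)$ and $\textnormal{end}\, N := \max\{t : N_t\neq 0\}$ for a graded module $N$, one has for any finitely generated graded $\K[x]$-module $M$
\[
\reg M = \max_{i\geqslant 0}\big\{\textnormal{end}\, H^i_{\mathfrak{m}}(M) + i\big\}.
\]
Applying this with $M=I$ separates $\reg I$ into the contribution of $H^0_{\mathfrak{m}}(I)$, that of $H^1_{\mathfrak{m}}(I)$, and that of the $H^i_{\mathfrak{m}}(I)$ with $i\geqslant 2$.

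Next I would invoke the standard comparison between local and sheaf cohomology. There is a four-term exact sequence
\[
0 \to H^0_{\mathfrak{m}}(I) \to I \to \bigoplus_{t\in\ZZ} H^0\big(\PP^n,\mathcal{I}(t)\big) \to H^1_{\mathfrak{m}}(I)\to 0,
\]
together with graded isomorphisms $H^{i}_{\mathfrak{m}}(I)_t \cong H^{i-1}\big(\PP^n,\mathcal{I}(t)\big)$ for every $i\geqslant 2$. The latter isomorphisms show that the terms with $i\geqslant 2$ reproduce exactly the sheaf-theoretic regularity of $\mathcal{I}$ in the sense of the $m$-regularity definition: the shift from homological degree $i$ to $i-1$ accounts precisely for the discrepancy between the algebraic and the geometric normalisations, so that $\max_{i\geqslant 2}\{\textnormal{end}\,H^i_{\mathfrak{m}}(I)+i\} = \reg \mathcal{I}$. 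Hence $\reg I = \max\big(\textnormal{end}\,H^0_{\mathfrak{m}}(I),\ \textnormal{end}\,H^1_{\mathfrak{m}}(I)+1,\ \reg\mathcal{I}\big)$, and it remains only to make the first two terms disappear.

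This is where saturation enters. Since $I$ is a nonzero ideal in the domain $\K[x]$ it has positive depth, so $H^0_{\mathfrak{m}}(I)=0$. Moreover $\K[x]$ is Cohen--Macaulay of dimension $n+1\geqslant 2$, whence $H^0_{\mathfrak{m}}(\K[x])=H^1_{\mathfrak{m}}(\K[x])=0$; feeding $0\to I\to\K[x]\to\K[x]/I\to 0$ into the long exact sequence of local cohomology then yields $H^1_{\mathfrak{m}}(I)\cong H^0_{\mathfrak{m}}(\K[x]/I)=I^\sat/I$. By hypothesis $I=I^\sat$, so this group vanishes as well. Both extra contributions therefore drop out and we conclude $\reg I = \reg\mathcal{I}$.

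I expect the only genuinely delicate point to be the bookkeeping of indices and conventions: matching the local-cohomology degree $i$ with the sheaf-cohomology degree $i-1$, and checking that the resulting shift exactly absorbs the difference between the two normalisations of regularity, so that no spurious constant survives. Once this is pinned down the role of the word \emph{saturated} is transparent, namely it is the precise condition forcing $H^0_{\mathfrak{m}}(I)$ and $H^1_{\mathfrak{m}}(I)$ to vanish, and these are the only possible obstructions to the equality.
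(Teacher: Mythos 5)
Your argument is correct, but note that there is no in-paper proof to compare it against: the paper states this proposition purely as a citation of Green's lecture notes \cite[Proposition 2.6]{GreenGIN} and never proves it. Judged on its own, your local-cohomology route is the standard modern argument and every step checks out. The Eisenbud--Goto formula $\reg I = \max_i\{\textnormal{end}\,H^i_{\mathfrak{m}}(I)+i\}$, the Serre--Grothendieck comparison sequence, and the index bookkeeping are all right: for $i\geqslant 2$ one has $\textnormal{end}\,H^i_{\mathfrak{m}}(I)+i=\textnormal{end}\,H^{i-1}_{*}(\mathcal{I})+(i-1)+1$, where $\textnormal{end}\,H^{j}_{*}(\mathcal{I})=\max\{t : H^j(\PP^n,\mathcal{I}(t))\neq 0\}$, and since Mumford's theorem makes $m$-regularity persist for all $m'\geqslant m$, the sheaf-theoretic regularity equals $\max_{j>0}\{\textnormal{end}\,H^j_{*}(\mathcal{I})+j+1\}$; so the two normalisations match with no leftover constant, exactly as you anticipated. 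The vanishing arguments are also sound: $H^0_{\mathfrak{m}}(I)=0$ because $I$ is a torsion-free module (a submodule of the domain $\K[x]$), and $H^1_{\mathfrak{m}}(I)\cong H^0_{\mathfrak{m}}(\K[x]/I)=I^{\sat}/I=0$ using that $\K[x]$ has depth $n+1\geqslant 2$ together with the saturation hypothesis. Two small points of hygiene: your closing sentence slightly overstates the role of saturation, since $H^0_{\mathfrak{m}}(I)$ vanishes for \emph{every} nonzero ideal, saturated or not --- only the $H^1$ obstruction is governed by $I=I^{\sat}$; and the degenerate cases $I=0$ and $I=(1)$ (the latter actually occurs in this paper, as the root of the recursion tree) should be checked separately or excluded, though for $I=(1)$ both regularities are $0$ and the statement survives.
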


\begin{proposition}[{\cite{GreenGIN,BayerStillmanREG}}]\label{prop:regularityDegreeGenerators}
The regularity of a Borel-fixed ideal $I \subset \K[x_0,\ldots,x_n]$ is equal to the maximal degree of one of its generators.
\end{proposition}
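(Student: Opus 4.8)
The plan is to exhibit an explicit minimal graded free resolution of $I$ and read the regularity directly off its graded Betti numbers. The essential input is that a Borel-fixed ideal over a field of characteristic $0$ is \emph{strongly stable}: the combinatorial property recalled above, namely $x^\alpha \in I \Rightarrow \frac{x_j}{x_i}x^\alpha \in I$ for every $x_i \mid x^\alpha$ and every $x_j > x_i$, is precisely strong stability, and in particular it implies the weaker \emph{stable} condition. Throughout I use the resolution-theoretic notion of regularity, $\reg I = \max\{j-i : \beta_{i,j}(I) \neq 0\}$; by Proposition \ref{prop:regularityVsCMregularity} this coincides with the Castelnuovo--Mumford regularity of the associated sheaf whenever $I$ is saturated.

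First I would invoke the Eliahou--Kervaire resolution, which is a minimal graded free resolution for every stable monomial ideal. Writing $G(I)$ for the minimal monomial generating set of $I$, and recalling the paper's convention that $\max u$ is identified with the index of the largest variable dividing $u$, the free module in homological position $i$ has a $\K$-basis indexed by the symbols $(u;S)$ with $u \in G(I)$ and $S \subseteq \{x_0,\ldots,x_{\max u -1}\}$, $|S| = i$; the summand attached to $(u;S)$ lives in internal degree $\deg u + i$. Consequently the only nonvanishing graded Betti numbers are
\[
\beta_{i,j}(I) = \sum_{\substack{u \in G(I)\\ \deg u = j-i}} \binom{\max u}{i},
\]
so that every nonzero $\beta_{i,j}$ forces $j - i = \deg u$ for some minimal generator $u$.

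The conclusion is then immediate. Since each contribution to a nonvanishing Betti number satisfies $j - i = \deg u$ with $u \in G(I)$, and since every generator $u$ already contributes in homological degree $i=0$ (the symbol $(u;\varnothing)$, giving $\beta_{0,\deg u}(I)\neq 0$), the maximum of $j-i$ over the resolution is attained and equals the largest generator degree:
\[
\reg I = \max\{\,j-i : \beta_{i,j}(I) \neq 0\,\} = \max_{u \in G(I)} \deg u,
\]
which is the claim.

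The step I expect to be the main obstacle is justifying the \emph{minimality} of the Eliahou--Kervaire complex, i.e.\ that all entries of its differentials lie in the maximal ideal $\mathfrak{m} = (x_0,\ldots,x_n)$, so that the ranks of the free modules genuinely record the Betti numbers and no cancellation can inflate $\reg I$ beyond $\max_{u}\deg u$. If one prefers to avoid building the full resolution, an alternative route exploits the regular element $x_0$ of Proposition \ref{prop:hyperplaneSection}: the short exact sequence \eqref{eq:hyperplaneSection}, combined with the Bayer--Stillman regularity criterion, permits an induction on the number of variables that reduces the statement for $I$ to the same statement for the Borel-fixed ideal $(I,x_0)$ in one fewer variable; here the delicate point is the bookkeeping relating the minimal generators of $I$ to those of $(I,x_0)$.
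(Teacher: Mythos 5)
The paper itself offers no proof of this proposition: it is quoted as a known result, with the citations to Green and to Bayer--Stillman standing in for the argument. So there is nothing internal to compare you against; your Eliahou--Kervaire route is a self-contained proof, and it is in fact one of the standard proofs of exactly this statement (your alternative sketch, induction on the variables via the regular element $x_0$ and the Bayer--Stillman criterion, is essentially what the Bayer--Stillman citation refers to). The logical skeleton of your main argument is sound: in characteristic $0$ a Borel-fixed ideal is strongly stable, hence stable; the Eliahou--Kervaire complex is a \emph{minimal} graded free resolution (minimality is part of their theorem, not a genuine obstacle); every basis symbol $(u;S)$ sits in internal degree $\deg u + \vert S \vert$, so $j-i = \deg u$ for every nonvanishing $\beta_{i,j}(I)$, and $\beta_{0,\deg u}(I)\neq 0$ for a generator $u$ of top degree, forcing $\reg I = \max_{u \in G(I)} \deg u$.

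However, your description of the resolution contains a concrete error coming from the paper's variable ordering. Here $x_n > \cdots > x_0$ and the Borel moves \emph{raise} indices, so stability is governed by the smallest variable $\min u$, not by $\max u$: the admissible symbols are the pairs $(u;S)$ with $S$ contained in the set of variables strictly \emph{greater} than $\min u$, which gives
\[
\beta_{i,j}(I) \;=\; \sum_{\substack{u \in G(I)\\ \deg u = j-i}} \binom{n - \min u}{i}
\]
rather than $\binom{\max u}{i}$. Your formula already fails for the principal Borel-fixed ideal $I = (x_n)$: it predicts $\beta_{1,2}(I) = \binom{n}{1} = n \neq 0$, whereas a principal ideal has no syzygies; the correct count $\binom{n-\min x_n}{i} = \binom{0}{i}$ recovers the resolution $0 \to \K[x](-1) \to I \to 0$. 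Fortunately the slip affects only the multiplicities of the free summands, not the internal-degree bookkeeping $j = \deg u + i$, which is all your regularity computation uses, so the proof stands once the indexing is corrected.
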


\section{Main properties}

Now we recall the technical properties that we need to guarantee the correctness of the algorithm, which are already proved in \cite{CLMR} and that we here adapt at the current notation.

\begin{lemma}\label{lem:removingMinimal}
Let $J\subset \K[x]$ be a saturated Borel-fixed with Hilbert polynomial $p(t)$ whose Gotzmann number is $r$. 
Let $x^{\beta}$ be a minimal monomial for $\leq_B$ of $\{J_r\} \subset \pos{n}{r}$ such that $\min x^{\beta} = x_0$. Then the ideal  $I =\langle \{J_r\}\setminus \{x^{\beta}\}\rangle^\sat$ is Borel-fixed with Hilbert polynomial $\overline{p}(t) = p(t)+1$.
\end{lemma}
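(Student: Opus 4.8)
The plan is to prove the two assertions separately: first that $I$ is Borel-fixed, and then that its Hilbert polynomial is $p(t)+1$. For the Borel property I would work with the unsaturated ideal $B' := \langle\{J_r\}\setminus\{x^\beta\}\rangle$ and check that $\{J_r\}\setminus\{x^\beta\}$ is still a Borel set, i.e. still closed under increasing elementary moves. This is exactly where the minimality of $x^\beta$ is used: if some admissible $\up{i}$ sent an element $x^\alpha\in\{J_r\}\setminus\{x^\beta\}$ onto $x^\beta$, then $x^\alpha<_B x^\beta$ with $x^\alpha\in\{J_r\}$, contradicting the minimality of $x^\beta$; hence no increasing move can re-create $x^\beta$, and the complement stays closed. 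Therefore $B'$ is Borel-fixed, and since the saturation of a Borel-fixed ideal is again Borel-fixed (it is obtained by setting $x_0=1$ in the generators), $I=\langle\{J_r\}\setminus\{x^\beta\}\rangle^\sat$ is Borel-fixed.

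For the Hilbert polynomial I would first reduce to computing that of $B'$, since saturation does not change the Hilbert polynomial. The regularity results quoted above provide the needed preliminaries: by Gotzmann's Regularity Theorem together with Propositions~\ref{prop:regularityVsCMregularity} and~\ref{prop:regularityDegreeGenerators}, the saturated Borel-fixed ideal $J$ has $\reg J\leqslant r$, so all of its minimal generators have degree at most $r$ and its Hilbert function equals $p(t)$ for every $t\geqslant r$. From the bound on the generators I would deduce that $J_t=\langle\{J_r\}\rangle_t$ for all $t\geqslant r$: every monomial of $J_t$ is a multiple of a generator of degree $d\leqslant r$, hence a multiple of a suitable degree-$r$ monomial of $J$. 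Writing $B:=\langle\{J_r\}\rangle$, this identifies $\dim_\K(\K[x]/B')_t-p(t)$ with the number of monomials of $B_t$ not lying in $B'_t$, for all $t\geqslant r$.

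The crux, which I expect to be the main obstacle, is to show that for every $t\geqslant r$ this set consists of the single monomial $x^\beta x_0^{t-r}$; here both hypotheses on $x^\beta$ are essential. A monomial of $B_t\setminus B'_t$ is a multiple $x^\mu=x^\beta x^\rho$ of $x^\beta$ divisible by no other element of $\{J_r\}$. If $x^\rho$ were divisible by some $x_i$ with $i\geqslant 1$, then, using $x_0\mid x^\beta$ (which holds since $\min x^\beta=x_0$), applying the admissible increasing move $\tfrac{x_i}{x_0}$ to $x^\beta$ produces a monomial of $\{J_r\}$ different from $x^\beta$ and dividing $x^\mu$, forcing $x^\mu\in B'_t$, a contradiction. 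Hence $x^\rho=x_0^{t-r}$ and $x^\mu=x^\beta x_0^{t-r}$. Conversely, any degree-$r$ divisor $x^\gamma\neq x^\beta$ of $x^\beta x_0^{t-r}$ is obtained from $x^\beta$ by lowering some of its variables to $x_0$, so $x^\gamma<_B x^\beta$, which the minimality of $x^\beta$ forbids; thus $x^\beta x_0^{t-r}$ indeed lies in $B_t\setminus B'_t$. Counting gives exactly one monomial, whence $\dim_\K(\K[x]/B')_t=p(t)+1$ for all $t\geqslant r$ and therefore $\overline{p}(t)=p(t)+1$.
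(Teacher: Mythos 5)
Your proof is correct, but it follows a genuinely different route from the paper's. The paper only verifies the codimension drop in the two degrees $r$ and $r+1$: by construction the drop is $1$ in degree $r$, and the identity $x_i x^\beta = \up{i-1}\circ\cdots\circ\up{0}(x^\beta)\cdot x_0$ (the same mechanism you exploit) shows that $x_0 x^\beta$ is the unique monomial of $J_{r+1}$ missing from $\langle\{J_r\}\setminus\{x^\beta\}\rangle_{r+1}$; it then invokes Gotzmann's Persistence Theorem \cite[Theorem 3.8]{GreenGIN} to promote this two-degree check to the equality of Hilbert polynomials $\overline{p}(t)=p(t)+1$. You avoid persistence entirely by carrying out the count in every degree $t\geqslant r$: in your notation $B=\langle\{J_r\}\rangle$, $B'=\langle\{J_r\}\setminus\{x^\beta\}\rangle$, the forward inclusion (any monomial of $B_t\setminus B'_t$ equals $x^\beta x_0^{t-r}$) uses the same trade of a factor $x_0$ of $x^\beta$ for a variable $x_i$ dividing the cofactor, and the reverse inclusion ($x^\beta x_0^{t-r}\notin B'_t$) is the observation that any other degree-$r$ divisor of $x^\beta x_0^{t-r}$ would be $<_B x^\beta$, contradicting minimality --- which incidentally is exactly what makes precise the paper's unproved claim that ``there are no other elements'' in degree $r+1$. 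What the paper's approach buys is brevity and the standard Gotzmann-style template (verify minimal growth once, let persistence do the rest); what yours buys is self-containedness: the Hilbert polynomial statement follows from the regularity preliminaries ($\reg J\leqslant r$, hence $J_t=\langle\{J_r\}\rangle_t$ and $\dim_\K(\K[x]/J)_t=p(t)$ for all $t\geqslant r$) plus elementary combinatorics of the Borel order, with no appeal to the Persistence Theorem. Your treatment of the Borel-fixedness claim (minimality keeps $\{J_r\}\setminus\{x^\beta\}$ a Borel set, and saturation preserves Borel-fixedness) coincides with the paper's one-line remark, and your reduction from $I$ to $B'$ via invariance of the Hilbert polynomial under saturation matches what the paper does implicitly.
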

\begin{proof}
First, note that by definition of minimal monomial, $\{I_r\}$ is still a Borel set. Called $q(t)$ the volume polynomial of $J$, we show that $I$ has volume polynomial $\overline{q}(t) = q(t)-1$ applying Gotzmann's Persistence Theorem \cite[Theorem 3.8]{GreenGIN}, i.e. proving that 
\[
\dim_{\K} J_r - \dim_{\K} I_r = \dim_{\K} J_{r+1} - \dim_{\K} I_{r+1} = 1.
\]  

By construction $\dim_{\K} J_r - \dim_{\K} I_r = 1$. The Borel condition ensures that $x^\beta x_0 \in J_{r+1} \setminus I_{r+1}$ and there are no other elements, because $x^\beta x_0 $ is the only monomial that cannot be generated from the monomials in $I_{r}$ by multiplication of a single variable.  In fact let us consider the monomial $x_i x^\beta$, $i > 0$. Since $x_0 \mid x^\beta$ the following identity holds:
\[
x_i x^\beta = \dfrac{x_i}{x_{i-1}} \cdot \ldots \cdot \dfrac{x_1}{x_0} x_0 x^\beta = \up{i-1}\circ \cdots \circ \up{0}(x^\beta) x_0
\]
and for each $i$, $\up{i-1}\circ \cdots \circ \up{0}(x^\beta)$ belongs to $I_r$, by the minimality of $x^\beta$.
\end{proof}

\begin{proposition}[{\cite[Proposition 5.2]{CLMR}}]\label{prop:difference}
Consider a saturated Borel-fixed ideal $J \subset \K[x]$ defining a subscheme with Hilbert polynomial $p(t)$ whose Gotzmann number is $r$. Let $I=J\vert_{x_1=x_0=1}$ be its double saturation and let $\overline{p}(t)$ be the Hilbert polynomial of the subscheme defined by $I$ in $\PP^n$. Then
\begin{equation}
\overline{p}(t)=p(t)- \dim_\K I_r + \dim_\K J_r.
\end{equation}
\end{proposition}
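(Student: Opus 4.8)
The plan is to compare $J$ and $I$ in the single degree $r$ and then upgrade the resulting numerical identity to an identity of polynomials by proving that $p(t)-\overline p(t)$ is constant. First I would reduce everything to the range $t\geq r$. Since $J$ is saturated, no minimal generator is divisible by $x_0$, so the substitution $x_0=1$ has no effect and $I=J\vert_{x_1=1}=J:x_1^{\infty}$, minimally generated by the monomials obtained from the generators of $J$ by deleting their $x_1$-part. By Gotzmann's Regularity Theorem (Theorem~\ref{th:RegularityTheorem}) together with Propositions~\ref{prop:regularityVsCMregularity} and~\ref{prop:regularityDegreeGenerators}, the regularity of $J$ equals the top degree of a minimal generator and is at most $r$; as the generators of $I$ have degree at most $\reg J\leq r$, Proposition~\ref{prop:regularityDegreeGenerators} gives $\reg I\leq r$ as well. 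Hence the Hilbert function and the Hilbert polynomial of both ideals coincide in every degree $\geq r$, and evaluating at $t=r$ gives
\[
\dim_{\K}J_r=\binom{r+n}{n}-p(r),\qquad \dim_{\K}I_r=\binom{r+n}{n}-\overline p(r),
\]
so that $\overline p(r)=p(r)-\dim_{\K}I_r+\dim_{\K}J_r$; this is exactly the claimed formula, but a priori only at $t=r$.

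Next I would prove that $p(t)-\overline p(t)$ is constant. The inclusion $J\subseteq I$ (each generator of $I$ divides the corresponding generator of $J$) yields the exact sequence $0\to J\to I\to I/J\to 0$, so $\dim_{\K}I_t-\dim_{\K}J_t=\dim_{\K}(I/J)_t$ for every $t$. Because $I=J:x_1^{\infty}$, the module $I/J$ is the $x_1$-power torsion of $\K[x]/J$. The decisive point is that this torsion is $0$-dimensional: the associated primes of a strongly stable ideal are top segments $(x_i,\ldots,x_n)$, and saturation with respect to $x_0$ has already discarded the maximal ideal, so the surviving ones satisfy $i\geq 1$; the only such prime containing $x_1$ is $(x_1,\ldots,x_n)$, the homogeneous ideal of the single point $[1:0:\cdots:0]$. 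Thus $\Supp(I/J)$ is this one point, the Hilbert polynomial of $I/J$ is a constant, and $\dim_{\K}(I/J)_t$ is eventually constant.

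Putting the two steps together, for $t\geq r$ we have $p(t)-\overline p(t)=\dim_{\K}I_t-\dim_{\K}J_t=\dim_{\K}(I/J)_t$, and the right-hand side is constant for $t\gg 0$; since $p-\overline p$ is a polynomial agreeing with a constant in all large degrees, it equals that constant everywhere, and the comparison at $t=r$ identifies the constant as $\dim_{\K}I_r-\dim_{\K}J_r$. The main obstacle is precisely this middle step, controlling $I/J$. I expect the cleanest route to be the associated-primes argument above; a more elementary alternative, closer to the combinatorial style of the paper, is to verify directly that every monomial of $I\setminus J$ has bounded $x_1$-exponent and that multiplication by the regular element $x_0$ carries $(I/J)_t$ bijectively onto $(I/J)_{t+1}$ once $t\geq r$, and then to invoke Gotzmann's Persistence Theorem exactly as in the proof of Lemma~\ref{lem:removingMinimal}.
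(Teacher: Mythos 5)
Your proof is correct, but there is no in-paper argument to measure it against: the paper states Proposition \ref{prop:difference} without proof, importing it wholesale from \cite[Proposition 5.2]{CLMR}. Judged on its own merits, your two-step architecture is sound. The regularity chain (Theorem \ref{th:RegularityTheorem} together with Propositions \ref{prop:regularityVsCMregularity} and \ref{prop:regularityDegreeGenerators}) legitimately turns the Hilbert-polynomial values at $t=r$ into Hilbert-function values, and identifying $I/J$ with the $x_1$-power torsion of $\K[x]/J$, whose support is at most the single point $[1:0:\cdots:0]$, is exactly the structural reason why $p(t)-\overline{p}(t)$ is constant; this ``differs by a constant, and the constant is read off in degree $r$'' formulation is precisely how the proposition is deployed inside the proof of Theorem \ref{cor:bijectionAllIdeals}.

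Two points should be made explicit to close the argument. First, you apply Propositions \ref{prop:regularityVsCMregularity} and \ref{prop:regularityDegreeGenerators} to $I$, which requires knowing that $I=J:x_1^{\infty}$ is again Borel-fixed and saturated. Both are one-line checks: if $x^{\gamma}x_1^k\in J$, $x_i\mid x^{\gamma}$ and $x_j>x_i$, then strong stability of $J$ (available since the paper works in characteristic $0$) gives $\bigl(x_j x^{\gamma}/x_i\bigr)x_1^k\in J$, hence $x_j x^{\gamma}/x_i\in I$; and since the generators of $I$ involve only $x_2,\ldots,x_n$, the ideal $I$ is saturated. Without these observations the degree bound on the generators of $I$ does not yet yield $\reg I\leqslant r$, nor does $\dim_\K I_r$ compute $\overline{p}(r)$. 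Second, your claim that $I$ is \emph{minimally} generated by the stripped generators of $J$ is an overstatement --- e.g. $(x_2^2,x_2x_1)$ strips to $(x_2^2,x_2)=(x_2)$ --- but this is harmless, since only the upper bound $r$ on the degrees of some generating set is ever used. With these touches added, your proof is complete and self-contained, which is arguably a service the paper itself does not provide for this statement.
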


\begin{theorem}\label{cor:bijectionAllIdeals}
Let $p(t)$ be an admissible Hilbert polynomial in $\mathbb{P}^n$ and let $r$ be an integer greater than or equal to the Gotzmann number of $p(t)$. There is a bijective function between the set $\mathcal{I}^{n}_{p(t)}$ of the saturated Borel-fixed ideals of $\K[x_0,\ldots,x_n]$ with Hilbert polynomial $p(t)$ and the set
\[ 
\mathcal{B}^n_{p(t)} = \left\{\begin{array}{c} \mathscr{B} \subset \pos{n}{r} \text{ Borel set s.t.} \\ \left\vert \restrict{\mathscr{B}^{\mathcal{C}}}{i}\right\vert =  \Delta^i p(r),\ \forall\ i = 0,\ldots,n-1\end{array}\right\}.
\]
More precisely
\[
\begin{array}{ccc}
\mathcal{I}^n_{p(t)} & \stackrel{1:1}{\longleftrightarrow} & \mathcal{B}^n_{p(t)}\\
J & \longrightarrow & \{J_r\}\\
\langle \mathscr{B} \rangle^{\sat} & \longleftarrow & \mathscr{B}.
\end{array}
\]
\end{theorem}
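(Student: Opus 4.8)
The plan is to show that the two maps $J\mapsto\{J_r\}$ and $\mathscr{B}\mapsto\langle\mathscr{B}\rangle^{\sat}$ are well defined and mutually inverse, with everything resting on one computation by iterated hyperplane section. For any Borel-fixed ideal $L\subset\K[x]$, setting $x_0=0$ yields a Borel-fixed ideal in $\K[x_1,\ldots,x_n]$ with quotient $\K[x]/(L,x_0)$; since the smallest variable is regular on any Borel-fixed ideal (Proposition \ref{prop:hyperplaneSection}), iterating with $x_1,\ldots,x_{i-1}$ produces a Borel-fixed ideal $\bar{L}^{(i)}\subset\K[x_i,\ldots,x_n]$ with $\K[x_i,\ldots,x_n]/\bar{L}^{(i)}\cong\K[x]/(L,x_0,\ldots,x_{i-1})$ of Hilbert polynomial $\Delta^i q$, where $q$ is the Hilbert polynomial of $\K[x]/L$. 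The degree-$r$ monomials outside $\bar{L}^{(i)}$ are exactly the degree-$r$ monomials outside $L$ whose minimum is $\geqslant i$, so their number is $|\restrict{\mathscr{B}^{\mathcal{C}}}{i}|$ for $\mathscr{B}=\{L_r\}$. When $L=\langle\mathscr{B}\rangle$ is generated in degree $r$, Proposition \ref{prop:regularityDegreeGenerators} gives $\reg L=r$, each $\bar{L}^{(i)}$ is again generated in degrees $\leqslant r$, and hence the Hilbert function of every section agrees with its Hilbert polynomial already in degree $r$. This yields the master identity $|\restrict{\mathscr{B}^{\mathcal{C}}}{i}|=\Delta^i q(r)$, valid for every Borel set $\mathscr{B}$ and every $i=0,\ldots,n-1$.

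First I would verify that $J\mapsto\{J_r\}$ lands in $\mathcal{B}^n_{p(t)}$. The set $\mathscr{B}:=\{J_r\}$ is Borel by definition, and Gotzmann's Regularity Theorem \ref{th:RegularityTheorem} together with Proposition \ref{prop:regularityVsCMregularity} gives $\reg J\leqslant r$. Thus $J$ is generated in degrees $\leqslant r$ (Proposition \ref{prop:regularityDegreeGenerators}), so $J_d=\K[x]_{d-r}\cdot J_r$ for all $d\geqslant r$; in particular $\langle\mathscr{B}\rangle$ and $J$ coincide in large degree and share the Hilbert polynomial $p$. Feeding this into the master identity gives $|\restrict{\mathscr{B}^{\mathcal{C}}}{i}|=\Delta^i p(r)$, so $\{J_r\}\in\mathcal{B}^n_{p(t)}$. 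The same agreement in large degree shows $\langle\{J_r\}\rangle^{\sat}=J^{\sat}=J$, i.e. the composite $\mathscr{B}\mapsto\langle\mathscr{B}\rangle^{\sat}$ after $J\mapsto\{J_r\}$ is the identity on $\mathcal{I}^n_{p(t)}$.

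The heart of the argument is that $\mathscr{B}\mapsto\langle\mathscr{B}\rangle^{\sat}$ lands in $\mathcal{I}^n_{p(t)}$. Put $L=\langle\mathscr{B}\rangle$ and let $q$ be the Hilbert polynomial of $\K[x]/L$. The master identity reads $|\restrict{\mathscr{B}^{\mathcal{C}}}{i}|=\Delta^i q(r)$, while membership of $\mathscr{B}$ in $\mathcal{B}^n_{p(t)}$ reads $|\restrict{\mathscr{B}^{\mathcal{C}}}{i}|=\Delta^i p(r)$, for all $i=0,\ldots,n-1$. Hence $\Delta^i(p-q)(r)=0$ for these $i$. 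Since a nonzero homogeneous ideal cuts out a subscheme of $\PP^n$, both $p$ and $q$ have degree at most $n-1$ (the degenerate case $\mathscr{B}=\emptyset$, $p(t)=\binom{t+n}{n}$, is trivial), so the Newton forward-difference expansion $f(r+h)=\sum_{i}\binom{h}{i}\Delta^i f(r)$ applied to $f=p-q$ forces $f\equiv 0$. Therefore $L$, and with it its saturation, has Hilbert polynomial $p(t)$.

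It remains to check that $J\mapsto\{J_r\}$ after $\mathscr{B}\mapsto\langle\mathscr{B}\rangle^{\sat}$ is the identity on $\mathcal{B}^n_{p(t)}$. Writing $J=\langle\mathscr{B}\rangle^{\sat}\supseteq\langle\mathscr{B}\rangle$ we get $\{J_r\}\supseteq\mathscr{B}$, hence $\pos{n}{r}\setminus\{J_r\}\subseteq\mathscr{B}^{\mathcal{C}}$; but both sets have cardinality $p(r)$---the first because $\reg J\leqslant r$ makes the Hilbert function of $\K[x]/J$ equal $p(r)$ in degree $r$, the second by the $i=0$ condition defining $\mathcal{B}^n_{p(t)}$---so they coincide and $\{J_r\}=\mathscr{B}$. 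The step I expect to be most delicate is the regularity bookkeeping underlying the master identity: one must justify ``Hilbert function $=$ Hilbert polynomial in degree $r$'' uniformly, invoking Gotzmann regularity for the saturated ideals and Proposition \ref{prop:regularityDegreeGenerators} for the a priori non-saturated ideals $\langle\mathscr{B}\rangle$, and one must confirm that each iterated hyperplane section really is a Borel-fixed ideal in fewer variables so that its Hilbert polynomial is exactly $\Delta^i q$. Should the direct estimate feel too coarse, Lemma \ref{lem:removingMinimal} and Proposition \ref{prop:difference} provide an inductive alternative that produces the same counts one monomial at a time.
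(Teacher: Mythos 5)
Your argument is correct, but it reaches the hard half of the theorem --- that $\langle\mathscr{B}\rangle^{\sat}$ really has Hilbert polynomial $p(t)$ --- by a genuinely different route than the paper. The paper proceeds by induction on $\deg p(t)$: the base case of constant polynomials is obtained by repeatedly applying Lemma \ref{lem:removingMinimal} (whose proof rests on Gotzmann's Persistence Theorem), and the inductive step applies the hypothesis to $\restrict{\mathscr{B}}{1}$ and then uses Proposition \ref{prop:difference} to conclude that the Hilbert polynomial of $\K[x]/\langle\mathscr{B}\rangle^{\sat}$ differs from $p(t)$ by a constant, which the $i=0$ count eliminates. You instead apply the iterated hyperplane-section count --- which the paper uses only for the direction $J\mapsto\{J_r\}$ --- to the (generally non-saturated) ideal $\langle\mathscr{B}\rangle$ itself, obtaining $\vert\restrict{\mathscr{B}^{\mathcal{C}}}{i}\vert=\Delta^i q(r)$ for its actual Hilbert polynomial $q$, and then pin down $q=p$ by pure interpolation: $p-q$ has degree at most $n-1$ and all of its iterated differences vanish at $r$, hence it is identically zero (descending induction on $i$ is the cleanest phrasing; the exact binomial coefficients in the Newton expansion are immaterial). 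This trades Lemma \ref{lem:removingMinimal}, Proposition \ref{prop:difference} and the Persistence Theorem for an elementary uniqueness-of-interpolation argument, packaged in a single \emph{master identity}; the paper's inductive structure has the separate virtue of mirroring exactly the recursive algorithm that the theorem is meant to justify.

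One point you should patch: you justify each section by citing Proposition \ref{prop:hyperplaneSection}, but for non-saturated Borel-fixed ideals the smallest variable need not be regular --- for instance $x_0$ is a zerodivisor modulo the Borel-fixed ideal $(x_0x_1,x_1^2)$. Your master identity is invoked precisely in this non-saturated setting ($\langle\mathscr{B}\rangle$ and its successive quotients are generated in degree $r$, not saturated). The conclusion you need survives because $\langle\mathscr{B}\rangle\subseteq\langle\mathscr{B}\rangle:x_0\subseteq\langle\mathscr{B}\rangle^{\sat}$ all share the same Hilbert polynomial, so $\K[x]/(\langle\mathscr{B}\rangle,x_0)$ still has Hilbert polynomial $\Delta q$ even though the multiplication-by-$x_0$ sequence is no longer exact on the left; this one-line repair (or an appeal to the fact that $(\langle\mathscr{B}\rangle:x_0)/\langle\mathscr{B}\rangle$ has finite length) should be made explicit, since the regularity claim is literally false in the generality in which you use it.
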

\begin{proof}
First of all note that if the two maps are well-defined, i.e. for each $J$, $J_{\geqslant r} = \langle J_r \rangle$ (Proposition \ref{prop:regularityDegreeGenerators} and Gotzmann's Regularity Theorem), 
\begin{eqnarray*}
& J\ \longrightarrow\ \{J_r\}\ \longrightarrow\ \big\langle\{J_r\}\big\rangle^{\sat} = J, &\\
& \mathscr{B}\ \longrightarrow\ \langle \mathscr{B} \rangle^{\sat}\ \longrightarrow\  \big\{\langle \mathscr{B} \rangle^{\sat}_r\big\} = \mathscr{B}.
\end{eqnarray*}

Let $J\subset \K[x]$ be a Borel-fixed with Hilbert polynomial $p(t)$ and let $\mathscr{N}= \pos{n}{r}\setminus \{J_r\}$. Obviously $\big\vert\restrict{\mathscr{N}}{0}\big\vert = \vert\mathscr{N}\vert = p(r)= \Delta^0 p(r)$. Using the short exact sequence \eqref{eq:hyperplaneSection}, we determine the Borel ideal $I = (J,x_0)^{\sat} \cap \K[x_1,\ldots,x_n]$ with Hilbert polynomial $\Delta p(t)$.
Thus being $\{I_r\} = \restrict{\{J_r\}}{1} \subset \pos{n-1}{r}$, $\left\vert \restrict{\mathscr{N}}{1} \right\vert = \left\vert \{I_r\}^{\mathcal{C}} \right\vert = \Delta p(r)$. Since $I$ is Borel-fixed in the polynomial ring $\K[x_1,\ldots,x_n]$ we can repeat the reasoning with the hyperplane section defined by $x_1 = 0$ and so on.

Let us now consider a Borel set $\mathscr{B} \subset \pos{n}{r}$, such that the complement $\mathscr{N} = \mathscr{B}^{\mathcal{C}}$ satisfies the condition $\left\vert\restrict{\mathscr{N}}{i}\right\vert = \Delta^{i} p(r)$ for every $i$. Firstly $\reg\left(\langle\mathscr{B}\rangle^{\sat}\right) \leqslant r$ by Proposition \ref{prop:regularityDegreeGenerators}, so let us prove that $\langle\mathscr{B}\rangle^\sat$ has Hilbert polynomial $p(t)$. We proceed by induction on the degree $d$ of the Hilbert polynomial. For any $n$, if $\deg p(t) = 0$, then $\restrict{\mathscr{N}}{i} = \emptyset$, for every $i \geqslant 1$, since $\Delta p(t) = 0$, that is for any $x^\beta \in \mathscr{N}$, $\min x^\beta = 0$. Applying repeatedly Lemma \ref{lem:removingMinimal} starting from the Hilbert polynomial $\overline{p}(t) = 0$ (corresponding to the ideal $(1)$), we obtain that $\langle \mathscr{B}\rangle^{\sat}$ defines a module $\K[x]/\langle \mathscr{B}\rangle^{\sat}$ having constant Hilbert polynomial $p(t) (= p(r))$. 

Let us know suppose that the map $\mathscr{B} \rightarrow \langle\mathscr{B}\rangle^{\sat}$ is well-defined for any Hilbert polynomial of degree $d-1$ and let $p(t)$ be a Hilbert polynomial of degree $d$. $\overline{\mathscr{B}} = \restrict{\mathscr{B}}{1} \subset \pos{n-1}{r}$ realizes the condition of the theorem w.r.t. the Hilbert polynomial $\Delta p(t)$ and $\deg \Delta p(t) = d-1$. Hence by the inductive hypothesis the ideal $\langle \overline{\mathscr{B}}\rangle^\sat \subset \K[x_1,\ldots,x_n]$ has Hilbert polynomial $\Delta p(t)$. Let $\overline{p}(t)$ be the Hilbert polynomial of $\langle \overline{\mathscr{B}}\rangle^\sat$ in $\K[x_0,\ldots,x_n]$: $\overline{p}(t) = p(t) + a$, because $\Delta \overline{p}(t) = \Delta p(t)$. $\langle \overline{\mathscr{B}}\rangle^\sat$ turns out to be the $x_1$-saturation of $\langle \mathscr{B}\rangle^{\sat}$, so by Proposition \ref{prop:difference} the Hilbert polynomial of $\K[x_0,\ldots,x_n]/\langle \mathscr{B}\rangle^{\sat}$ differs by a constant from $\overline{p}(t)$ and since $\vert \mathscr{N} \vert = \vert \restrict{\mathscr{N}}{0}\vert = p(r)$ it coincides with $p(t)$.
\end{proof}

\section{The algorithm}
Therefore to compute the saturated Borel-fixed ideals we can construct Borel sets with the prescribed property. The proof of Theorem \ref{cor:bijectionAllIdeals} suggests to use a recursive algorithm: i.e. to determine the Borel sets in $\pos{n}{r}$ corresponding to the Hilbert polynomial $p(t)$, we begin computing the Borel sets in $\pos{n-1}{r}$ corresponding to the Hilbert polynomial $\Delta p(t)$. 

Let $\overline{\mathscr{B}} \subset \pos{n-1}{r}$ a Borel set corresponding to the Hilbert polynomial $\Delta p(t)$ and let $\overline{\mathscr{N}} = \overline{\mathscr{B}}^{\mathcal{C}}$. In order for $\overline{\mathscr{B}}$ to be the restriction $\restrict{\mathscr{B}}{1}$ of a Borel set $\mathscr{B} \subset \pos{n}{r}$ (where $\pos{n}{r}$ contains one more variable smaller than variables in $\pos{n-1}{r}$), each monomial that can be obtained by decreasing moves from a monomial in $\overline{\mathscr{N}}$ has to belong to $\mathscr{N} = \mathscr{B}^{\mathcal{C}}$. This extension of an order set $\overline{\mathscr{N}} \subset \pos{n-1}{r}$ to an order set $\mathscr{N} \subset \pos{n}{r}$ has an ideal interpretation.

\begin{lemma}[{\cite[Lemma 5.1]{CLMR}}]\label{lem:partition} Let $\overline{\mathscr{B}} \subset \pos{n-1}{r}$ be a Borel set and let $\overline{\mathscr{N}} = \overline{\mathscr{B}}^{\mathcal{C}}$. Moreover let $\mathscr{N} \subset \pos{n}{r}$ be the order set containing the monomials in $\overline{\mathscr{N}}$ and all those obtained by decreasing moves from them. Then,
\begin{equation}
\mathscr{N} = \pos{n}{r} \setminus \Big\{ \big(\langle\overline{\mathscr{B}}\rangle^{\sat} \cdot \K[x_0,\ldots,x_n]\big)_r \Big\}.
\end{equation}
\end{lemma}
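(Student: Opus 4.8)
The plan is to compare the two order sets monomial by monomial, using the substitution that pushes the smallest variable $x_0$ back up to $x_1$. Concretely, I would introduce the map $\phi\colon \pos{n}{r}\to\pos{n-1}{r}$ that replaces every factor $x_0$ by $x_1$, so that for a monomial $x^\alpha=x_0^{a_0}\mu$ with $\mu\in\K[x_1,\ldots,x_n]$ one has $\phi(x^\alpha)=x_1^{a_0}\mu=\up{0}^{a_0}(x^\alpha)$. Writing $\overline{B}:=\langle\overline{\mathscr{B}}\rangle^{\sat}$, the whole statement then reduces to the two equivalences
\[
x^\alpha\in\mathscr{N}\ \Longleftrightarrow\ \phi(x^\alpha)\in\overline{\mathscr{N}},\qquad\qquad x^\alpha\in\big\{(\overline{B}\cdot\K[x])_r\big\}\ \Longleftrightarrow\ \phi(x^\alpha)\in\overline{\mathscr{B}}.
\]
Indeed, since $\overline{\mathscr{B}}=\overline{\mathscr{N}}^{\mathcal{C}}$ in $\pos{n-1}{r}$, taking complements in the first equivalence and feeding the result into the second gives $x^\alpha\notin\mathscr{N}\Leftrightarrow x^\alpha\in\{(\overline{B}\cdot\K[x])_r\}$, which is exactly $\mathscr{N}=\pos{n}{r}\setminus\{(\langle\overline{\mathscr{B}}\rangle^{\sat}\cdot\K[x])_r\}$.

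For the second (ideal-theoretic) equivalence I would observe that a degree-$r$ monomial $x_0^{a_0}\mu$ lies in $\overline{B}\cdot\K[x]$ if and only if it is divisible by a minimal generator of $\overline{B}$; as these generators involve only $x_1,\ldots,x_n$, this happens if and only if $\mu\in\overline{B}$. Here the saturation is essential: since $\overline{B}$ is saturated (its smallest variable being $x_1$), the membership $\mu\in\overline{B}$ is equivalent to $x_1^{a_0}\mu\in\overline{B}$, and because $r$ is at least the Gotzmann number, Proposition \ref{prop:regularityDegreeGenerators} together with Theorem \ref{th:RegularityTheorem} yields $(\overline{B})_r=\overline{\mathscr{B}}$. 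Hence $x_1^{a_0}\mu\in\overline{B}$ if and only if $\phi(x^\alpha)=x_1^{a_0}\mu\in\overline{\mathscr{B}}$. This is the step in which the hypotheses on $r$ and the saturation really enter.

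The first (combinatorial) equivalence is the heart of the argument, and I expect it to be the main obstacle. The direction ``$\Leftarrow$'' is immediate: if $\phi(x^\alpha)=x_1^{a_0}\mu\in\overline{\mathscr{N}}$, then applying $\down{1}$ exactly $a_0$ times produces $x_0^{a_0}\mu=x^\alpha$, so $x^\alpha$ is reached from $\overline{\mathscr{N}}$ by decreasing moves and therefore lies in $\mathscr{N}$. For ``$\Rightarrow$'' I would first establish that $\phi$ is compatible with decreasing moves: if $y=\down{j}(z)$ is admissible in $\pos{n}{r}$, then either $j\geqslant 2$ and $\phi(y)=\down{j}(\phi(z))$ is again an admissible decreasing move in $\pos{n-1}{r}$, or $j=1$ and $\phi(y)=\phi(z)$. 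Both cases follow from direct exponent bookkeeping, noting that $\down{j}$ with $j\geqslant 2$ neither creates nor destroys $x_0$ or $x_1$, whereas $\down{1}$ only trades an $x_1$ for an $x_0$, which $\phi$ undoes.

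Granting this compatibility, ``$\Rightarrow$'' follows by induction on the length of a decreasing chain: any $x^\alpha\in\mathscr{N}$ is obtained from some $m\in\overline{\mathscr{N}}$ by a sequence of decreasing moves, and applying $\phi$ turns this sequence into a chain of decreasing moves in $\pos{n-1}{r}$ (with the $\down{1}$-steps collapsing to repetitions) running from $\phi(m)=m$ to $\phi(x^\alpha)$. Since $\overline{\mathscr{N}}$ is an order set, i.e.\ closed under decreasing moves, this forces $\phi(x^\alpha)\in\overline{\mathscr{N}}$. The only subtlety requiring care is the admissibility of each induced move $\down{j}(\phi(z))$ for $j\geqslant 2$, which holds because $x_j\mid z$ forces $x_j\mid\phi(z)$; once this is verified the proof is complete.
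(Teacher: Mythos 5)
Your overall plan coincides with the paper's: the paper's proof also works with the substitution that pushes every $x_0$ up to $x_1$ (written there as $\alpha_0\up{0}(x^\alpha)$) and splits according to whether the resulting monomial of $\pos{n-1}{r}$ lies in $\overline{\mathscr{B}}$ or in $\overline{\mathscr{N}}$; your combinatorial equivalence $x^\alpha\in\mathscr{N}\Leftrightarrow\phi(x^\alpha)\in\overline{\mathscr{N}}$ is proved correctly, and in more detail than in the paper (which only spells out the two forward implications, i.e.\ that $\mathscr{N}$ and $\big\{(\langle\overline{\mathscr{B}}\rangle^{\sat}\cdot\K[x])_r\big\}$ cover $\pos{n}{r}$). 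The genuine gap is in your justification of the key equality $\big(\overline{B}\big)_r=\overline{\mathscr{B}}$, where $\overline{B}=\langle\overline{\mathscr{B}}\rangle^{\sat}$, on which the backward direction of your ideal-theoretic equivalence entirely rests. The lemma carries no Hilbert polynomial and no Gotzmann number in its hypotheses: $\overline{\mathscr{B}}$ is an \emph{arbitrary} Borel set of $\pos{n-1}{r}$, so the phrase ``because $r$ is at least the Gotzmann number'' invokes an assumption that is not available, and it can actually fail. For instance, take $n=4$, $r=2$ and $\overline{\mathscr{B}}=\{x_4^2,x_4x_3,x_3^2\}\subset\pos{3}{2}$: then $\overline{B}=(x_4^2,x_4x_3,x_3^2)\subset\K[x_1,\ldots,x_4]$ is saturated with Hilbert polynomial $3t+1$, whose Gotzmann number is $4>r$, yet the lemma (and the equality $(\overline{B})_2=\overline{\mathscr{B}}$) still holds. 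Moreover, even when $r$ does dominate the Gotzmann number, Proposition \ref{prop:regularityDegreeGenerators} and Theorem \ref{th:RegularityTheorem} only give the bound $\reg\overline{B}\leqslant r$; by themselves they do not say that saturating $\langle\overline{\mathscr{B}}\rangle$ creates no new monomials in degree $r$ --- for that you would still need a further argument, e.g.\ that the Hilbert functions of $\langle\overline{\mathscr{B}}\rangle$ and of $\overline{B}$ both agree with the common Hilbert polynomial in degree $r$, and then a dimension count combined with the inclusion $\langle\overline{\mathscr{B}}\rangle_r\subseteq(\overline{B})_r$.

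The good news is that $\big(\overline{B}\big)_r=\overline{\mathscr{B}}$ is true for \emph{every} Borel set, with a short purely combinatorial proof, so your argument can be repaired without touching its structure. Work inside $\K[x_1,\ldots,x_n]$ and recall (as stated in the paper) that $\overline{B}$ is generated by the monomials $x^\delta\vert_{x_1=1}$ with $x^\delta\in\overline{\mathscr{B}}$. If $x^\gamma\in(\overline{B})_r$, then $x^\gamma=\big(x^\delta\vert_{x_1=1}\big)\cdot m$ for some $x^\delta\in\overline{\mathscr{B}}$ and some monomial $m\in\K[x_1,\ldots,x_n]$, necessarily of degree $\delta_1$ (the exponent of $x_1$ in $x^\delta$), since both $x^\gamma$ and $x^\delta$ have degree $r$. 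As $x_1$ is the smallest variable, $x_1^{\delta_1}$ is the $\leq_B$-minimum of its degree, so $m\geqslant_B x_1^{\delta_1}$; and the criterion $x^\alpha\geqslant_B x^\beta\Leftrightarrow\sum_{j= i}^{n}(\alpha_j-\beta_j)\geqslant 0$ for every $i$ shows that $\geqslant_B$ is preserved under multiplication by a fixed monomial. Hence
\begin{equation*}
x^\gamma=\big(x^\delta\vert_{x_1=1}\big)\cdot m\ \geqslant_B\ \big(x^\delta\vert_{x_1=1}\big)\cdot x_1^{\delta_1}=x^\delta,
\end{equation*}
and $x^\gamma\in\overline{\mathscr{B}}$ because Borel sets are closed under increasing moves; the reverse inclusion $\overline{\mathscr{B}}\subseteq(\overline{B})_r$ is trivial. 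With this replacement your proof is complete and, unlike the appeal to Gotzmann regularity, it covers the lemma in the full generality in which it is stated.
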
 
\begin{proof}
Let us call $\mathscr{B}$ the Borel set containing the monomials of degree $r$ belonging to the ideal $\langle\overline{\mathscr{B}}\rangle^{\sat} \cdot \K[x_0,\ldots,x_n]$.
Let $x^\alpha = x_n^{\alpha_n}\cdots x_0^{\alpha_0}$ be a monomial of $\pos{n}{r}$ and suppose $\min x^\alpha = 0$, i.e. $\alpha_0 > 0$. The monomial $\alpha_0\up{0}(x^\alpha) = x_n^{\alpha_n}\cdots x_1^{\alpha_1 + \alpha_0}$\hfill belongs\hfill to\hfill $\pos{n-1}{r}$,\hfill so\hfill either\hfill $\alpha_0\up{0}(x^\alpha) \in \overline{\mathscr{B}}$\hfill or $\alpha_0\up{0}(x^\alpha) \in \overline{\mathscr{N}}$.

If\hfill $x_n^{\alpha_n}\cdots x_1^{\alpha_1 + \alpha_0} \in \overline{\mathscr{B}}$,\hfill then\hfill $x_n^{\alpha_n}\cdots x_2^{\alpha_2}$\hfill is\hfill in\hfill $\langle \overline{\mathscr{B}} \rangle^{\sat}$\hfill and\hfill so\hfill $x^\alpha \in \langle\overline{\mathscr{B}}\rangle^{\sat} \cdot \K[x_0,\ldots,x_n]$,\hfill otherwise\\ $x_n^{\alpha_n}\cdots x_1^{\alpha_1 + \alpha_0} \in \overline{\mathscr{N}}$ implies $x^{\alpha} \in \mathscr{N}$.
\end{proof}

By Proposition \ref{prop:difference}, we know that the Hilbert polynomial corresponding to a Borel set $\mathscr{B}$ of the type $\{(\langle\overline{\mathscr{B}}\rangle^{\sat}\cdot \K[x])_r\}$ differs from the target Hilbert polynomial by a constant: to determine this constant we compare the value $p(r)$ of the Hilbert polynomial $p(t)$ in degree $r$ with the cardinality of the order set $\mathscr{N}$ obtained by decreasing moves from $\overline{\mathscr{N}}$.

\begin{lemma}\label{lem:countingConditions}
Let $\overline{\mathscr{N}} \subset \pos{n-1}{r}$ be an order set and let $\mathscr{N} \subset \pos{n}{r}$ be the order set defined from $\overline{\mathscr{N}}$ by decreasing moves. Then,
\begin{equation}\label{eq:cardN}
\left\vert \mathscr{N} \right\vert = \sum_{\begin{subarray}{c} x^{\alpha} \in \overline{\mathscr{N}} \\ x^{\alpha}=x_n^{\alpha_n}\cdots x_{1}^{\alpha_1}\end{subarray}} (\alpha_1 + 1).
\end{equation}
\end{lemma}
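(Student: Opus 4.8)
The plan is to organise $\mathscr{N}$ into fibres over $\overline{\mathscr{N}}$ and count each fibre separately. Concretely, I would introduce the ``collapsing'' map $\phi\colon\pos{n}{r}\to\pos{n-1}{r}$ that absorbs every power of $x_0$ into $x_1$, sending $x_n^{\beta_n}\cdots x_1^{\beta_1}x_0^{\beta_0}$ to $x_n^{\beta_n}\cdots x_2^{\beta_2}x_1^{\beta_1+\beta_0}$ (this is exactly $\beta_0\up{0}$ applied to the monomial, as in the proof of Lemma~\ref{lem:partition}). The whole statement then reduces to two assertions: that $\mathscr{N}=\phi^{-1}(\overline{\mathscr{N}})$, and that each fibre of $\phi$ over a monomial $x^\alpha=x_n^{\alpha_n}\cdots x_1^{\alpha_1}\in\overline{\mathscr{N}}$ has exactly $\alpha_1+1$ elements. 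Granting these, summing the fibre cardinalities over $\overline{\mathscr{N}}$ gives \eqref{eq:cardN} immediately, since the fibres are pairwise disjoint and cover $\mathscr{N}$.

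Counting a single fibre is routine: the monomials $x^\beta\in\pos{n}{r}$ with $\phi(x^\beta)=x^\alpha$ are exactly those with $\beta_i=\alpha_i$ for $i\geqslant 2$ and $\beta_1+\beta_0=\alpha_1$, so $\beta_0$ ranges freely over $0,1,\ldots,\alpha_1$, yielding the $\alpha_1+1$ monomials $x_n^{\alpha_n}\cdots x_2^{\alpha_2}x_1^{\alpha_1-k}x_0^{k}$, $0\leqslant k\leqslant\alpha_1$. Each of these is obtained from $x^\alpha$ by applying the decreasing move $\down{1}$ exactly $k$ times. This observation already gives one inclusion $\phi^{-1}(\overline{\mathscr{N}})\subseteq\mathscr{N}$, since $x^\alpha\in\overline{\mathscr{N}}\subseteq\mathscr{N}$ and every element of its fibre is reached from $x^\alpha$ by admissible decreasing moves.

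The main point, and the step I expect to be the delicate one, is the reverse inclusion $\mathscr{N}\subseteq\phi^{-1}(\overline{\mathscr{N}})$, equivalently $\phi(\mathscr{N})\subseteq\overline{\mathscr{N}}$. Here I would argue by induction on the number of decreasing moves used to generate an element of $\mathscr{N}$ from $\overline{\mathscr{N}}$, checking that each elementary move keeps us inside $\phi^{-1}(\overline{\mathscr{N}})$. The move $\down{1}=\frac{x_0}{x_1}$ only shifts weight between $x_1$ and $x_0$, hence leaves $\phi$ unchanged; any move $\down{j}$ with $j\geqslant 2$ acts on variables above $x_1$ and therefore commutes with $\phi$, in the sense that $\phi(\down{j}(x^\beta))=\down{j}(\phi(x^\beta))$, with $\down{j}$ admissible on $\phi(x^\beta)$ whenever it is admissible on $x^\beta$. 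Since $\overline{\mathscr{N}}$ is itself an order set, it is closed under these decreasing moves inside $\pos{n-1}{r}$, so the image under $\phi$ never leaves $\overline{\mathscr{N}}$. Combining the two inclusions yields $\mathscr{N}=\phi^{-1}(\overline{\mathscr{N}})$, and the fibre count then completes the proof.
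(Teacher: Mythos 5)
Your proof is correct and takes essentially the same approach as the paper: the paper's one-line argument likewise attributes to each $x^\alpha \in \overline{\mathscr{N}}$ the $\alpha_1+1$ monomials $x^\alpha, \down{1}(x^\alpha), \ldots, \alpha_1\down{1}(x^\alpha)$, and your collapsing map $\phi$ simply makes explicit the disjointness of these fibres and the coverage of $\mathscr{N}$ (via the commutation of $\phi$ with the moves $\down{j}$, $j\geqslant 2$) that the paper leaves implicit.
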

\begin{proof}
Each monomial $x^{\alpha} \in \overline{\mathscr{N}}$ imposes the belonging to $\mathscr{N}$ (in addition to itself) of the monomials obtained from it applying the decreasing moves $\{\down{1},2\down{1},\ldots,\alpha_1\down{1}\}$.
\end{proof}

There are three possibilities:
\begin{itemize}
\item $p(r) - \vert \mathscr{N} \vert < 0$, $\overline{\mathscr{N}}$ imposes too many monomials ouside the ideal, so the hyperplane section defined by $\langle \overline{\mathscr{B}} \rangle^{\sat}$ has to be discarded (there exist no Borel-fixed ideals corresponding to $p(t)$ with such a hyperplane section);
\item $p(r) - \vert \mathscr{N} \vert = 0$, $\langle\overline{\mathscr{B}} \rangle^{\sat}  \subset \K[x_0,\ldots,x_n]$ is one of the ideals sought;
\item $p(r) - \vert \mathscr{N} \vert > 0$, applying repeatedly Lemma \ref{lem:removingMinimal} we determine the ideals we are looking for.
\end{itemize}

\begin{proposition}[Cf. {\cite[Remark 5.3]{CLMR}}]\label{prop:estimateRemovals}
Let $p(t)$ be an admissible Hilbert polynomial with Gotzmann number $r$ and let $r_1$ be the Gotzmann number of $\Delta p(t)$.
\begin{enumerate}[(i)]
\item\label{it:estimateRemovals_i} Given a saturated Borel-fixed ideal $J \subset \K[x_1,\ldots,x_n]$ such that $\K[x_1,\ldots,x_n]/J$ has Hilbert polynomial $\Delta p(t)$, to pass from $\{(J\cdot\K[x_0,\ldots,x_n])_r\}\subset \pos{n}{r}$ to a Borel set corresponding to $p(t)$, we need to remove at most $r-r_1$ monomials.
\item\label{it:estimateRemovals_ii} We need to remove exactly $r-r_1$ monomials if we consider the lexicographic ideal $L \subset \K[x_1,\ldots,x_n]$ corresponding to the polynomial $\Delta p(t)$.
\end{enumerate}
\end{proposition}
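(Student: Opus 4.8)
The plan is to convert the number of removals into a difference of two Hilbert polynomials and then evaluate that difference through the Gotzmann decomposition. Write $p_0(t)$ for the Hilbert polynomial of $\K[x_0,\ldots,x_n]/\big(J\cdot\K[x_0,\ldots,x_n]\big)$. Since the generators of $J$ involve only $x_1,\ldots,x_n$, this quotient equals $\big(\K[x_1,\ldots,x_n]/J\big)[x_0]$; hence $x_0$ is a nonzerodivisor on it, the ideal $J\cdot\K[x]$ is already saturated, and $\reg\big(J\cdot\K[x]\big)=\reg J\leqslant r_1\leqslant r$. Consequently $\{(J\cdot\K[x])_r\}$ is exactly the Borel set whose complement is the order set $\mathscr{N}$ of Lemma~\ref{lem:partition} and Lemma~\ref{lem:countingConditions}. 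Each application of Lemma~\ref{lem:removingMinimal} raises the Hilbert polynomial by one, while $\Delta p_0(t)=\Delta p(t)$ because differencing undoes the extension by $x_0$, so the number of monomials one must remove to reach $p(t)$ is the constant
\[
k \;=\; p(t)-p_0(t) \;=\; p(r)-|\mathscr{N}|.
\]
Thus both parts reduce to comparing $p_0(t)$ with $p(t)$.

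First I would record the formal identity $p(t)=\Sigma\big(\Delta p(t)\big)+(r-r_1)$. Writing $p(t)=\sum_{i=1}^{r}\binom{t+a_i-(i-1)}{a_i}$ with $a_1\geqslant\cdots\geqslant a_r\geqslant 0$, the Gotzmann number $r_1$ of $\Delta p(t)$ is the number of strictly positive $a_i$, the $r-r_1$ trailing terms with $a_i=0$ being the units $\binom{t-(i-1)}{0}=1$ that $\Delta$ discards; applying first $\Delta$ and then $\Sigma$ returns the $r_1$ surviving terms unchanged, so the two polynomials differ by precisely these $r-r_1$ units. Next I would invoke the extremal property of lexicographic ideals (Macaulay's estimate on Hilbert-function growth together with Gotzmann's persistence theorem, already used above): among all saturated ideals of $\K[x_1,\ldots,x_n]$ with Hilbert polynomial $\Delta p(t)$, the lexicographic one $L$ has the smallest quotient Hilbert function in each degree, $\dim_\K(\K[x_1,\ldots,x_n]/L)_e\leqslant \dim_\K(\K[x_1,\ldots,x_n]/J)_e$. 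One proves this by a downward induction starting from $e\gg 0$, where all these quotients already equal $\Delta p(e)$ and lex attains equality in Macaulay's bound. Summing over $e=0,\ldots,r$ gives $p_0^{L}(r)\leqslant p_0^{J}(r)$, hence $p_0^{L}(t)\leqslant p_0^{J}(t)$ for the associated polynomials.

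The computational heart, and the step I expect to be the main obstacle, is to identify the lexicographic value $p_0^{L}(t)=\Sigma\big(\Delta p(t)\big)$. I would prove this by showing that $L\cdot\K[x_0,\ldots,x_n]$ is itself the saturated lexicographic ideal attached to $\Sigma(\Delta p)$ in $\K[x_0,\ldots,x_n]$: adjoining the least variable $x_0$ keeps the standard monomials $\{x_0^{j}m\}$ lexicographically smallest in each degree, an operation that raises every index of the Macaulay decomposition by one and is therefore nothing but $\Sigma$; equivalently, the saturated lex ideal of $\Sigma(\Delta p)$ has $x_0$ as a nonzerodivisor and its $x_0$-section is the lex ideal of $\Delta p$, so it coincides with $L\cdot\K[x]$. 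Granting this, the conclusions follow at once. For the lexicographic ideal,
\[
k \;=\; p(t)-p_0^{L}(t) \;=\; p(t)-\Sigma\big(\Delta p(t)\big) \;=\; r-r_1,
\]
which is part~(ii); for an arbitrary admissible $J$,
\[
k \;=\; p(t)-p_0^{J}(t) \;=\; (r-r_1)-\big(p_0^{J}(t)-\Sigma(\Delta p)\big)\;\leqslant\; r-r_1,
\]
using $p_0^{J}(t)\geqslant p_0^{L}(t)=\Sigma(\Delta p)$, which is part~(i), with equality precisely when $J=L$.
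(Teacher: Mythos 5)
Your overall scaffolding is sound and, for part (ii), essentially identical to the paper's: the reduction of the number of removals to the constant $k=p(t)-p_0^J(t)$, the bookkeeping identity $p(t)=\Sigma(\Delta p)(t)+(r-r_1)$ read off from the Gotzmann decomposition, and the identification of $L\cdot\K[x_0,\ldots,x_n]$ with the saturated lexicographic ideal of $\Sigma(\Delta p)(t)$ are all exactly the paper's argument. The gap is in part (i). To get $p_0^J(t)\geqslant\Sigma(\Delta p)(t)$ you invoke a pointwise extremality of the lex ideal --- that among all saturated ideals of $\K[x_1,\ldots,x_n]$ with Hilbert polynomial $\Delta p(t)$ the lexicographic one has the smallest quotient Hilbert function in every degree --- justified by a downward induction resting on the claim that \emph{lex attains equality in Macaulay's bound}. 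That claim is false for saturated lexicographic ideals: for a lex ideal $L$, equality in Macaulay's growth bound at degree $e$ holds if and only if $L_{e+1}=\K[x_1,\ldots,x_n]_1\cdot L_e$, i.e. if and only if $L$ has no minimal generator in degree $e+1$. Concretely, take $\Delta p(t)=d$ constant, so $L=(x_n,\ldots,x_3,x_2^{\,d})$ with quotient Hilbert function $h_L(e)=\min(e+1,d)$: in degree $d-1$ the value is $d=\binom{d}{d-1}$, Macaulay's bound allows $\binom{d+1}{d}=d+1$ in degree $d$, yet $h_L(d)=d$. So your induction step breaks down at precisely the degrees where $L$ acquires new generators, which always occur below the Gotzmann number; the extremality statement itself is true, but it is a genuinely harder theorem than anything needed here, and nothing in your sketch establishes it.

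The fix is short, it is what the paper does, and your own second paragraph already contains all the ingredients: instead of comparing Hilbert \emph{functions}, apply the Gotzmann-decomposition reasoning directly to $p_0^J(t)$. Indeed $p_0^J(t)$ is itself an admissible Hilbert polynomial (it is the Hilbert polynomial of $\K[x]/J\cdot\K[x]$) whose first difference equals $\Delta p(t)$; hence its decomposition consists of the decomposition of $\Sigma(\Delta p)(t)$ followed by some number $c\geqslant 0$ of trailing terms $\binom{t-(i-1)}{0}=1$, so that $p_0^J(t)=\Sigma(\Delta p)(t)+c\geqslant\Sigma(\Delta p)(t)$. In other words, $\Sigma(\Delta p)(t)$ is the \emph{minimal} admissible Hilbert polynomial with first difference $\Delta p(t)$, and combined with your identity $p(t)-\Sigma(\Delta p)(t)=r-r_1$ this yields part (i) with no Hilbert-function comparison at all --- this polynomial-level minimality is exactly the paper's one-line proof. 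A final minor point: your closing assertion that equality holds \emph{precisely} when $J=L$ is an overclaim; your argument (and the paper's) shows that the lex ideal attains equality, not that it is the only ideal doing so, and the proposition does not require uniqueness.
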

\begin{proof}\emph{(\ref{it:estimateRemovals_i})}
The minimal Hilbert polynomial having first difference equal to $\Delta p(t)$ is $\Sigma(\Delta p)(t)$ and the Gotzmann number of $\Sigma(\Delta p)(t)$ coincides with the Gotzmann number of $\Delta p(t)$.

\emph{(\ref{it:estimateRemovals_ii})} The Hilbert polynomial $\Sigma(\Delta p)(t)$ is admissible so there exists the saturated lexicographic ideal $L$ with such Hilbert polynomial. By construction $L$ is the $1$-lifting of the ideal defining its hyperplane section, therefore it has no generators involving the variable $x_1$ and the ideal $L \cap \K[x_1,\ldots,x_n]$, being generated by the same monomials, is still a lexicographic ideal.
\end{proof}

\subsection{The pseudocode description}
The recursive strategy naturally gives rise to a rooted tree where the nodes are all the Borel-fixed ideals met during the computation and the father of each node is the ideal of its hyperplane section (see Figure \ref{fig:tree} for an example). 
This graph turns out to be a rooted tree because with a sequence of hyperplane section we obtain a unique ideal from any Borel-fixed ideal with the given Hilbert polynomial:
\begin{itemize}
\item the ideal $(1) \subset \K[x_{d+1},\ldots,x_n]$ applying $d+1$ sections, whenever the degree of the Hilbert polynomial $d$ is smaller than $n-1$;
\item the ideal $(x_n^{c}) \subset \K[x_{n-1},x_n]$, where $c = \Delta^{d} p (r)$, if $d = n-1$.
\end{itemize}
The leaves are the Borel-fixed ideal we are looking for and by definition this tree will have height $d+2$ if $d < n-1$ and $d+1$ if $d=n-1$.

To compute the Borel-fixed ideals in $\K[x_0,\ldots,x_n]$ with Hilbert polynomial $p(t)$, the algorithm described in \cite{CLMR} requires to compute the ideals in $\K[x_1,\ldots,x_n]$ with Hilbert polynomial $\Delta p(t)$ and so on, until the last recursive call that requires to compute the ideal representing the root of the tree. Now, starting from the root of the tree, the algorithm performs a Breadth-First-Search visit. This approach is not optimal, because it requires much space in the memory to store also all the ideal which are not leaves at maximal height and so which are necessary only temporarily. Moreover some of such ideals can be discarded immediately because too many conditions are imposed when embedded in a polynomial ring with one more variable.
For this reason we prefer an algorithm that visits the tree with a Depth-First-Search approach. The function \textsc{BorelFixedIdealsGenerator} (\textbf{Algorithm} \ref{alg:BorelGeneratorDFS}) initializes the computation determining the root of the tree and starting the DFS visit.

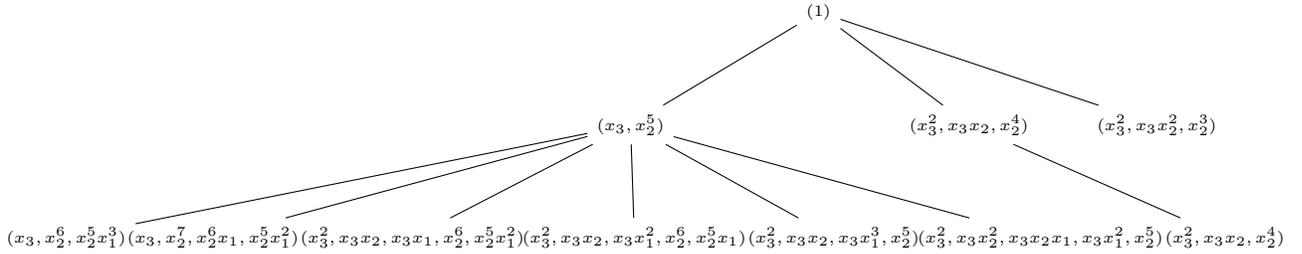
\begin{figure*}[!ht]
\begin{center}
\begin{tikzpicture}[>=latex]
\node (c11) at (0,0) [] {\tiny $(x_3,x_2^6,x_2^5x_1^3)$};
\node (c12) at (1.95,0) [] {\tiny $(x_3,x_2^7,x_2^6x_1,x_2^5x_1^2)$};
\node (c13) at (4.6,0) [] {\tiny $(x_3^2,x_3x_2,x_3x_1,x_2^6,x_2^5x_1^2)$};
\node (c14) at (7.55,0) [] {\tiny $(x_3^2,x_3x_2,x_3x_1^2,x_2^6,x_2^5x_1)$};
\node (c15) at (10.2,0) [] {\tiny $(x_3^2,x_3x_2,x_3x_1^3,x_2^5)$};
\node (c16) at (12.95,0) [] {\tiny $(x_3^2,x_3x_2^2,x_3x_2x_1,x_3x_1^2,x_2^5)$};
\node (c21) at (15.4,0) [] {\tiny $(x_3^2,x_3x_2,x_2^4)$};

\node (b1) at (7.5,1.5) [] {\tiny $(x_3,x_2^5)$};
\node (b2) at (12,1.5) [] {\tiny $(x_3^2,x_3x_2,x_2^4)$};
\node (b3) at (14.5,1.5) [] {\tiny $(x_3^2,x_3x_2^2,x_2^3)$};

\node (a) at (10,3) [] {\tiny $(1)$};

\draw [-,thin] (a) -- (b1);
\draw [-,thin] (a) -- (b2);
\draw [-,thin] (a) -- (b3);

\draw [-,thin] (b1) -- (c11);
\draw [-,thin] (b1) -- (c12);
\draw [-,thin] (b1) -- (c13);
\draw [-,thin] (b1) -- (c14);
\draw [-,thin] (b1) -- (c15);
\draw [-,thin] (b1) -- (c16);

\draw [-,thin] (b2) -- (c21);

\end{tikzpicture}
\caption{\label{fig:tree}The tree associated at the computation of the Borel-fixed ideals in $\K[x_0,x_1,x_2,x_3]$ with Hilbert polynomial $p(t) = 5t-2$.}
\end{center}
\end{figure*}

\begin{algorithm}[!ht]
\caption{The procedure that initializes the computation detecting the root of the tree.}
\label{alg:BorelGeneratorDFS}
\begin{algorithmic}[1]
\STATE $\textsc{BorelFixedIdealsGenerator}\big(\K[x_0,\ldots,x_n],p(t)\big)$
\REQUIRE $\K[x_0,\ldots,x_n]$, polynomial ring.
\REQUIRE $p(t)$, admissible Hilbert polynomial in $\PP^{n}$.
\ENSURE the set of all Borel-fixed ideals in $\K[x_0,\ldots,x_n]$ with Hilbert polynomial $p(t)$.
\STATE $d \leftarrow \deg p(t)$;
\STATE $r \leftarrow \textsc{GotzmannNumber}\big(p(t)\big)$;
\IF{$d = n-1$}
\STATE $c \leftarrow \Delta^d p(t)$;
\STATE $\mathscr{B} \leftarrow \big\{(x_{n}^{c})_r\big\} \subset \pos{2}{r}$;\hfill \COMMENT{$\K[x_{n-1},x_n]$.}
\RETURN $\textsc{BorelIdeals}\big(\K[x_0,\ldots,x_n],p(t),d,\mathscr{B}\big)$;
\ELSE
\STATE $\mathscr{B} \leftarrow \pos{n-d}{r}$;\hfill \COMMENT{$\K[x_{d+1},\ldots,x_n]$.}
\RETURN $\textsc{BorelIdeals}\big(\K[x_0,\ldots,x_n],p(t),d+1,\mathscr{B}\big)$;
\ENDIF
\end{algorithmic}
\end{algorithm}

The core of the algorithm is described by the function \textsc{BorelIdeals} (\textbf{Algorithm} \ref{alg:coreDFS}). It takes a Borel set $\overline{\mathscr{B}}$ corresponding to a Borel-fixed ideal in $\K[x_k,\ldots,x_n]$ with Hilbert polynomial $\Delta^k p(t)$. If $k=0$ then $\overline{\mathscr{B}}$ corresponds to one of the searched ideal, so the function returns the saturation of the ideal generated by the monomials in $\overline{\mathscr{B}}$. Otherwise, the function embeds $\overline{\mathscr{B}}$ in a poset with one more variable obtaining $\mathscr{B}$, it computes how many monomials do not belong to $\mathscr{B}$ and how many monomials ($q$) have to be removed. Then it calls the function \textsc{Remove} to compute all the Borel set that can be obtained from $\mathscr{B}$ removing $q$ monomials and finally for each of such new Borel set we have the recursive call with index $k-1$.

\begin{algorithm}[!ht]
\caption{The core of the recursive strategy to compute Borel-fixed ideals in $\K[x]$ with Hilbert polynomial $p(t)$.}
\label{alg:coreDFS}
\begin{algorithmic}[1]
\STATE $\textsc{BorelIdeals}\big(\K[x_0,\ldots,x_n],p(t),k,\overline{\mathscr{B}}\big)$
\REQUIRE $\K[x_0,\ldots,x_n]$, polynomial ring.
\REQUIRE $p(t)$, admissible Hilbert polynomial in $\PP^{n}$.
\REQUIRE $k$, integer s.t. $0\leqslant k \leqslant \deg p(t)$.
\REQUIRE $\overline{\mathscr{B}}$, Borel set in $\pos{n-k}{r}$ s.t. the ideal $\langle\overline{\mathscr{B}}\rangle^\sat$ in $\K[x_{k},\ldots,x_n]$ has Hilbert polynomial $\Delta^{k} p(t)$.
\ENSURE the set of all saturated Borel-fixed ideals $J$ in $\K[x_0,\ldots,x_n]$ with Hilbert polynomial $p(t)$ s.t. $\restrict{\{J_r\}}{k} = \overline{\mathscr{B}}$.
\IF{$k = 0$}
\RETURN \label{line:saturation} $\langle\overline{\mathscr{B}}\rangle^\sat$;
\ELSE
\STATE\label{line:embed1} $\mathscr{B} \leftarrow \big\{\big(\langle\overline{\mathscr{B}}\rangle^\sat\cdot\K[x_{k-1},\ldots,x_n]\big)_r\big\} \subset \pos{n-k+1}{r}$;
\STATE\label{line:embed2} $q \leftarrow \Delta^{k-1} p(r) - \left\vert\mathscr{B}^{\mathcal{C}}\right\vert$;
\IF{$q\geqslant 0$}
\STATE $\textsf{oneMoreVariable} \leftarrow \textsc{Remove}(\mathscr{B},q,1)$;
\STATE $\textsf{ideals} \leftarrow \emptyset$;
\FORALL{$\widetilde{\mathscr{B}} \in \textsf{oneMoreVariable}$}
\STATE \parbox[t]{10cm}{$\textsf{ideals} \leftarrow \textsf{ideals} \cup {}$\\ \phantom{m} $\textsc{BorelIdeals}\big(\K[x_0,\ldots,x_n],p(t),k-1,\widetilde{\mathscr{B}}\big)$;}
\ENDFOR
\RETURN $\textsf{ideals}$;
\ELSE
\RETURN $\emptyset$;
\ENDIF
\ENDIF
\end{algorithmic}
\end{algorithm}

Also the function \textsc{Remove} (\textbf{Algorithm} \ref{alg:removeWithoutRepetions}) uses a recursive strategy, i.e. given a Borel set $\mathscr{B}$ and $q$ monomials to remove, for each minimal element $x^\alpha$ of $\mathscr{B}$, it calls itself on the Borel set $\mathscr{B}\setminus \{x^\alpha\}$ to which we have to remove $q-1$ monomials. Whenever we consider a Borel set $\mathscr{B}$ with (at least) two minimal elements $x^\alpha$ and $x^\gamma$ and we need to remove (at least) two monomials, the strategy just described would generate two times the same Borel set because $\big(\mathscr{B}\setminus\{x^\alpha\}\big)\setminus\{x^\gamma\} = \big(\mathscr{B}\setminus\{x^\gamma\}\big)\setminus\{x^\alpha\}$.
To avoid this repetition we add, as argument of the function \textsc{Remove}, a monomial $x^\beta$ and we require that any monomial removed from $\mathscr{B}$ is greater than $x^\beta$ w.r.t. a fixed term ordering (in \textbf{Algorithm} \ref{alg:removeWithoutRepetions} we chose $\DegLex$). So whenever the function is called by \textsc{BorelIdeals} we pass as argument the monomial 1 (any removal is admissible), whereas when \textsc{Remove} is called by itself, that is some removal has been already performed, we pass as argument the last monomial removed. In this way, assuming $x^\alpha >_{\DegLex} x^\gamma$, the function \textsc{Remove} will only generate the Borel set $\big(\mathscr{B}\setminus\{x^\gamma\}\big) \setminus\{x^\alpha\}$.
 
\begin{algorithm}[!ht]
\caption{The function to remove monomials from a Borel set.}
\label{alg:removeWithoutRepetions}
\begin{algorithmic}[1]
\STATE $\textsc{Remove}(\mathscr{B},q,x^\beta)$
\REQUIRE $\mathscr{B}$, a Borel set.
\REQUIRE $q$, a non-negative integer.
\REQUIRE $x^\beta$, a monomial.
\ENSURE the set of all Borel sets obtained from $\mathscr{B}$ removing in all the possible ways $q$ monomials. The monomial $x^\beta$ is to avoid repetitions: it will be 1 when the function is called by \textsc{BorelIdeals}, whereas it will be the last monomial removed whenever the function is called by itself.
\IF{$q = 0$}
\RETURN $\{ \mathscr{B} \}$;
\ELSE
\STATE $\textsf{borelSets} \leftarrow \emptyset$;
\STATE\label{line:minimals} $\textsf{minimalMonomials} \leftarrow \textsc{MinimalElements}(\mathscr{B})$;
\FORALL{$x^\alpha \in \textsf{minimalMonomials}$}
\IF{$x^\alpha >_{\texttt{DegLex}} x^\beta$}
 \STATE\label{line:remove} $\textsf{borelSets} \leftarrow{}$\parbox[t]{5cm}{$\textsf{borelSets} \cup {}$\\ $\textsc{Remove}\big(\mathscr{B}\setminus\{x^{\alpha}\},q-1,x^{\alpha}\big)$;}
 \ENDIF
\ENDFOR
\RETURN \textsf{borelSets};
\ENDIF
\end{algorithmic}
\end{algorithm}

\begin{algorithm}[!ht]
\caption{Auxiliary functions}
\label{alg:aux}
\begin{algorithmic}
\STATE $\textsc{GotzmannNumber}\big(p(t)\big)$
\REQUIRE $p(t)$, a Hilbert polynomial.
\ENSURE the Gotzmann number of $p(t)$. 
\end{algorithmic}

\medskip

\begin{algorithmic}
\STATE $\textsc{MinimalElements}(\mathscr{B})$
\REQUIRE $\mathscr{B}$, a Borel set.
\ENSURE the set of minimal elements of $\mathscr{B}$. 
\end{algorithmic}
\end{algorithm}

\subsection{The implementation}

The key of an efficient implementation is how we realize the Borel set. The guidelines are
\begin{itemize}
\item slim structure, in order to take up less memory as possible (since the number of final objects can be huge);
\item quick implementation of the basic operations:
\begin{enumerate}
\item embedding of a Borel set in a poset with one more variable (lines \ref{line:embed1}-\ref{line:embed2} \textbf{Algorithm} \ref{alg:coreDFS});
\item computation of the minimal elements of a Borel set (line \ref{line:minimals} \textbf{Algorithm} \ref{alg:removeWithoutRepetions});
\item removal of a monomial from a Borel set (line \ref{line:remove} \textbf{Algorithm} \ref{alg:removeWithoutRepetions});
\item computation of the saturation of the ideal generated by a Borel set (line \ref{line:saturation} \textbf{Algorithm} \ref{alg:coreDFS}).
\end{enumerate}
\end{itemize}

The most compact way to describe a Borel set is to store its minimal elements. This is efficient also to evaluate the belonging to the Borel set of any other monomial, because we can use the following proposition.
\begin{proposition}[{\cite[Proposition 2.3]{LellaDeformations}}]
Let $x^\alpha, x^\beta$ be two monomials in $\K[x_0,\ldots,x_n]_r$.
\begin{equation}
x^\alpha >_B x^\beta \ \Longleftrightarrow\ \sum_{j=i}^n (\alpha_j - \beta_j) \geqslant 0,\ \forall\ i=0,\ldots,n.
\end{equation}
\end{proposition}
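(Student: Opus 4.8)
The plan is to translate the combinatorial order $>_B$ into the language of partial sums and to recognize the right-hand condition as a dominance (majorization-type) order. For a monomial $x^\gamma = x_n^{\gamma_n}\cdots x_0^{\gamma_0}$ of degree $r$ I set $\sigma_i(\gamma) := \sum_{j=i}^n \gamma_j$, the number of variables of index at least $i$ counted with multiplicity. The stated condition then reads $\sigma_i(\alpha) \geqslant \sigma_i(\beta)$ for every $i = 0,\ldots,n$; note that for $i=0$ both sides equal $r$, so that case is automatic and only the indices $i = 1,\ldots,n$ carry information. Recall also that $x^\alpha >_B x^\beta$ means precisely that $x^\alpha$ is reachable from $x^\beta$ through a chain of admissible increasing elementary moves, since each defining relation $\down{j}(x^\delta) <_B x^\delta$ is the same as $x^\delta = \up{j-1}(\down{j}(x^\delta))$.

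For the forward implication I would track the effect of a single increasing move. The move $\up{i}$ sends $x^\gamma$ to $\frac{x_{i+1}}{x_i}\,x^\gamma$, i.e.\ it lowers $\gamma_i$ by one and raises $\gamma_{i+1}$ by one. A direct check shows this leaves every partial sum $\sigma_k$ unchanged except $\sigma_{i+1}$, which increases by exactly one: for $k\leqslant i$ both altered exponents lie in the range $[k,n]$ and the changes cancel, for $k\geqslant i+2$ neither lies in the range, and only $k=i+1$ picks up a net $+1$. Hence any composition of admissible increasing moves can only increase the partial sums, so from $x^\alpha >_B x^\beta$ we immediately obtain $\sigma_i(\alpha)\geqslant\sigma_i(\beta)$ for all $i$.

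For the converse I would argue by induction on the nonnegative integer $N := \sum_{k=0}^n\big(\sigma_k(\alpha)-\sigma_k(\beta)\big)$. If $N=0$ then $\sigma_k(\alpha)=\sigma_k(\beta)$ for all $k$, and taking successive differences $\gamma_k = \sigma_k(\gamma)-\sigma_{k+1}(\gamma)$ forces $\alpha=\beta$. If $N>0$, let $m$ be the smallest index with $D_m := \sigma_m(\alpha)-\sigma_m(\beta) > 0$; since $\sigma_0(\alpha)=\sigma_0(\beta)$ we have $m\geqslant 1$ and, by minimality, $D_{m-1}=0$. Then $\alpha_{m-1}-\beta_{m-1} = D_{m-1}-D_m = -D_m < 0$, so $\beta_{m-1} > \alpha_{m-1}\geqslant 0$ and the move $\up{m-1}$ is admissible on $x^\beta$. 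Applying it yields a monomial $x^{\beta'}$ in which $\sigma_m$ has increased by one and every other partial sum is unchanged, so all inequalities $\sigma_k(\alpha)-\sigma_k(\beta')\geqslant 0$ persist while $N$ drops by one. By induction $x^\alpha$ is reachable from $x^{\beta'}$, hence from $x^\beta$, by increasing moves, giving $x^\alpha >_B x^\beta$.

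The main obstacle — and the only place requiring a genuine idea — is the inductive step of the converse: at each stage one must exhibit an admissible increasing move that is simultaneously legal (it needs $\beta_{m-1}>0$) and harmless for the dominance inequalities (it needs $D_m>0$ so that no $\sigma_k(\alpha)-\sigma_k(\beta')$ turns negative). Choosing $m$ to be the smallest index where the dominance inequality is \emph{strict} is exactly what forces both requirements to hold at once and makes the monovariant $N$ strictly decrease, which is what drives the induction. I would finally remark that, read strictly, the equivalence concerns the reflexive order $\geqslant_B$, the case $\alpha=\beta$ satisfying all the inequalities trivially; the strict statement follows by additionally excluding $\alpha=\beta$, equivalently by assuming $N>0$.
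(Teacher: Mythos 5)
The paper does not prove this proposition: it is quoted from \cite[Proposition 2.3]{LellaDeformations}, so there is no internal proof to compare yours with. Judged on its own, your argument is correct and complete. The forward direction is right: an admissible move $\up{i}$ raises the partial sum $\sigma_{i+1}(\gamma)=\sum_{j\geqslant i+1}\gamma_j$ by exactly one and leaves every other $\sigma_k$ unchanged, so reachability by increasing moves forces dominance. The converse is where the content lies, and your choice of $m$ minimal with $D_m=\sigma_m(\alpha)-\sigma_m(\beta)>0$ does exactly the double duty you describe: minimality gives $D_{m-1}=0$, whence $\beta_{m-1}-\alpha_{m-1}=D_m>0$, so $\up{m-1}$ is admissible on $x^\beta$; and $D_m>0$ guarantees that all dominance inequalities survive the move while the monovariant $N$ drops by one, so the induction closes. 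Your final remark is also accurate and worth keeping: with the strict relation $>_B$ on the left, the equivalence as printed fails for $\alpha=\beta$ (the right-hand side then holds vacuously), and the correct reading is the reflexive order $\geqslant_B$ --- which is indeed how this paper applies the criterion immediately after stating it, testing whether $x^\gamma \geqslant_B x^\alpha$ for minimal elements $x^\alpha$ of a Borel set.
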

Indeed if a monomial $x^\gamma$ belongs to the Borel set, there exists a minimal element $x^\alpha$ such that $x^\gamma \geq_B x^\alpha$, whereas if $x^\gamma$ does not belong there exists a minimal element $x^\beta$ such that $x^\gamma <_B x^\beta$.

In this way, the computation of the set $S$ of minimal elements is immediate. When a minimal monomial $x^\alpha$ is removed, we computed all the monomials that can be obtained from $x^\alpha$ with an increasing elementary move and for each of them we check if it belongs to the Borel set corresponding to $S\setminus\{x^\alpha\}$: if not we add such monomial to the set of minimal elements.

The description of the Borel set by means of its minimal elements turns out to be efficient also when we need to pass to a poset with one more variable. Indeed given the Borel set $\overline{\mathscr{B}} \subset \pos{n-k}{r}$ with minimal elements stored in the set $\overline{M}$, by Lemma \ref{lem:partition} we know that the Borel set $\mathscr{B} = \big\{\big(\langle\overline{\mathscr{B}}\rangle^\sat\cdot\K[x_{k-1},\ldots,x_n]\big)_r\big\} \subset \pos{n-k+1}{r}$ is described by the set of minimal elements
\begin{equation}
M = \left\{ x_n^{\alpha_n} \cdots x_{k+1}^{\alpha_{k+1}} x_{k-1}^{\alpha_k}\ \vert\ x_n^{\alpha_n} \cdots x_{k+1}^{\alpha_{k+1}} x_{k}^{\alpha_k} \in \overline{M}\right\}.
\end{equation}

A not trivial task is to compute the number of monomials that do not belong to the Borel set after the embedding in a poset with one more variable. In principle it consists in the computation of the Hilbert polynomial and in its evaluation at degree $r$. But the algebraic approach (used in \cite{MooreNagel}) slows down the algorithm because any software dedicated to the study of polynomial ideals computes the Hilbert polynomial with Gr\"obner basis tools, which in this case are unnecessary.

Given a Borel set $\overline{\mathscr{B}} \subset \pos{n-k}{r}$, let $\overline{\mathscr{N}} = \overline{\mathscr{B}}^{\mathcal{C}}$. If we decompose $\overline{\mathscr{N}}$ according to the power of the smallest variable $x_k$, i.e.
\begin{equation}
\overline{\mathscr{N}} = \coprod_{i=0}^r \overline{\mathscr{N}}_i,\qquad \overline{\mathscr{N}}_i = \left\{x^{\alpha} \in \overline{\mathscr{N}} \text{ s.t. } \alpha_k = i \right\}.
\end{equation}
We can compute the cardinality of $\mathscr{N} = \mathscr{B}^{\mathcal{C}}$ rewriting \eqref{eq:cardN} of Lemma \ref{lem:countingConditions} as
\[
\left\vert \mathscr{N} \right\vert = \sum_{\begin{subarray}{c} x^{\alpha} \in \overline{\mathscr{N}} \\ x^{\alpha}=x_n^{\alpha_n}\cdots x_{k}^{\alpha_k}\end{subarray}} (\alpha_k + 1) = \sum_{i=0}^r \sum_{\begin{subarray}{c} x^{\alpha} \in \overline{\mathscr{N}}_i \\ x^{\alpha}=x_n^{\alpha_n}\cdots x_{k}^{i}\end{subarray}} (i + 1) = \sum_{i=0}^r (i+1)\vert \overline{\mathscr{N}}_i\vert.
\]
Hence we add to the structure describing a Borel set an array of $r+1$ integers, such that the $i$-th index is equal to the number of monomials not belonging to the Borel set with the power of the smallest variable equal to $i$. It is easy also to deduce the array corresponding to a Borel set after the embedding in a poset with one more variable, indeed any $x^\alpha \in \overline{\mathscr{N}}_i$ implies the belonging to $\mathscr{N}$ of $i+1$ monomials
\[
\begin{split}
& x^\alpha = x_n^{\alpha_n}\cdots x_{k}^{\alpha_k} x_{k-1}^0 \in \mathscr{N}_0,\\
& \down{k}(x^\alpha) = x_n^{\alpha_n}\cdots x_{k}^{\alpha_k-1} x_{k-1} \in \mathscr{N}_1,\\ 
& \qquad\vdots \\
& \alpha_k \down{k}(x^\alpha) = x_n^{\alpha_n}\cdots x_{k}^{0} x_{k-1}^{\alpha_k} \in \mathscr{N}_{\alpha_k}
\end{split}
\]
so that
\[
\vert\mathscr{N}_j\vert = \sum_{i=0}^j \vert\overline{\mathscr{N}}_i\vert.
\]
Moreover when a minimal monomial $x^\alpha$ is removed from a Borel set, we increase by one the index of the array corresponding to the power of the smallest variable in $x^\alpha$.

The last operation which has to be as quick as possible is the computation of the saturation of the ideal. We exploit the algebraic approach of \cite{MooreNagel} and we add to the structure describing the Borel set the list of monomials that generate the saturated ideal corresponding to the Borel set. If $x^\alpha = x_n^{\alpha_n}\cdots x_{k}^{\alpha_k}$ is a minimal monomial of $\mathscr{B}$ then $x^{\underline{\alpha}} = x^\alpha\vert_{x_{k}=1}= x_n^{\alpha_n}\cdots x_{k+1}^{\alpha_{k+1}}$ is a generator of the saturated ideal. As shown in \cite{MooreNagel}, removing $x^\alpha$ from $\mathscr{B}$ implies that $\langle \mathscr{B}\setminus\{x^{\alpha}\}\rangle^\sat$ is generated by the same generators of $\langle\mathscr{B}\rangle^\sat$ except for $x^{\underline{\alpha}}$ that is replaced by
\[
x^{\underline{\alpha}}\cdot x_m,\ \ldots\ ,x^{\underline{\alpha}}\cdot x_1,\qquad\text{where } x_m = \min x^{\underline{\alpha}}.
\]

\subsection{Experimental results}
The following experimental results are obtained with an implementation of the algorithm realized with the guidelines showed above, coded in \texttt{java} and ran on a MacBook Pro with a 2.4 GHz Intel Core 2 Duo processor. It can be tested by means of an applet available at 
\begin{center}
\href{http://www.personalweb.unito.it/paolo.lella/HSC/borelGenerator.html}{\texttt{www.personalweb.unito.it/paolo.lella/HSC/borelGenerator.html}}.
\end{center}
In the following tables we consider Hilbert polynomials of degree 0,1,2 (i.e. points, curves and surfaces) and projective spaces of increasing dimension and we reported the elapsed time (in seconds) of the computation and the number of Borel-fixed ideals obtained. 

\begin{table}[H]
\begin{center}
\begin{tabular}{c | c | c | c | c}
Time (sec)\phantom{$\Big\vert$} & $n=5$ & $n=10$ & $n=15$ & $n=20$ \\
\hline
$p(t) = 5$\phantom{$\Big\vert$} & 0.101 & 0.025 & 0.017 & 0.021 \\
\hline
$p(t) = 10$\phantom{$\Big\vert$} & 0.062 &	0.064 &	0.119 &	0.048 \\
\hline
$p(t) = 15$\phantom{$\Big\vert$} & 0.079 &	0.225 &	0.298 &	0.401 \\	
\hline
$p(t) = 20$\phantom{$\Big\vert$} & 0.341 &	1.595 &	2.735 &	3.870 \\
\hline
$p(t) = 25$\phantom{$\Big\vert$} & 2.094 &	13.595 &	24.497 &	33.303 \\
\end{tabular}

\bigskip

\begin{tabular}{c | c | c | c | c}
Ideals\phantom{$\Big\vert$} & $n=5$ & $n=10$ & $n=15$ & $n=20$ \\
\hline
$p(t) = 5$\phantom{$\Big\vert$} & 5 & 5 & 5& 5 \\
\hline
$p(t) = 10$\phantom{$\Big\vert$} & 42& 50& 50& 50\\
\hline
$p(t) = 15$\phantom{$\Big\vert$} & 287& 417& 425&425 \\
\hline
$p(t) = 20$\phantom{$\Big\vert$} & 1732& 3130& 3263& 3271\\
\hline
$p(t) = 25$\phantom{$\Big\vert$} & 9501& 21616& 23158& 23291\\
\end{tabular}
\end{center}
\caption{\label{tab:deg0} Experiments with constant Hilbert polynomials. The Gotzmann number coincides with the number of points.}
\end{table}

\begin{table}[H]
\begin{center}
\begin{tabular}{c | c | c | c | c}
Time (sec)\phantom{$\Big\vert$} & $n=5$ & $n=10$ & $n=15$ & $n=20$ \\
\hline
$5t+1\ (11)$\phantom{$\Big\vert$} & 0.117 & 0.159 &	0.067 &	0.0621 \\	
\hline
$5t+7\ (17)$\phantom{$\Big\vert$} & 0.502 &	1.480 &	2.312 &	3.290 \\
\hline
$5t+13\ (23)$\phantom{$\Big\vert$} & 10.513 &	56.652 & 91.456 &	128.341 \\	
\hline
$8t-6\ (22)$\phantom{$\Big\vert$} & 0.987&	2.623	&4.138	&5.852 \\
\hline
$8t-3\ (25)$\phantom{$\Big\vert$} & 3.008 &	14.128 &	22.960 &	32.300 \\
\hline
$8t\ (28)$\phantom{$\Big\vert$} & 12.960 &	72.053 &	273.719& 	238.856\\
\end{tabular}

\bigskip

\begin{tabular}{c | c | c | c | c}
Ideals\phantom{$\Big\vert$} & $n=5$ & $n=10$ & $n=15$ & $n=20$ \\
\hline
$5t+1\ (11)$\phantom{$\Big\vert$} & 89 & 98 &	98 &	 98 \\	
\hline
$5t+7\ (17)$\phantom{$\Big\vert$} & 3028 &	 4560 &	4587 &	4587 \\
\hline
$5t+13\ (23)$\phantom{$\Big\vert$} & 58124 &	123689 & 126962 &	127030 \\
\hline
$8t-6\ (22)$\phantom{$\Big\vert$} & 4171 &	 6741	& 6837	& 6838 \\
\hline
$8t-3\ (25)$\phantom{$\Big\vert$} & 17334 &	32073 &	32848 &	32868 \\
\hline
$8t\ (28)$\phantom{$\Big\vert$} & 68291 &	144660 &	149777& 	149976\\
\end{tabular}
\end{center}
\caption{\label{tab:deg1} Experiments with Hilbert polynomials of curves. The Gotzmann number is reported in brackets.}
\end{table}

\begin{table}[H]
\begin{center}
\begin{tabular}{c | c | c | c | c}
Time (sec)\phantom{$\Big\vert$} & $n=5$ & $n=10$ & $n=15$ & $n=20$ \\
\hline
$2t^2+8t-46\ (16)$\phantom{$\Big\vert$} & 0.312 & 0.189 &	0.304 &	0.516\\	
\hline
$2t^2+8t-42\ (20)$\phantom{$\Big\vert$} & 0.103 &	0.338 &	0.558 &	0.883\\
\hline
$2t^2+8t-38\ (24)$\phantom{$\Big\vert$} & 0.741& 	3.167 &	5.237 &	7.055\\
\hline
$4t^2-12t+10\ (20)$\phantom{$\Big\vert$} & 0.147 &	0.280 &	0.377 & 0.561 \\
\hline
$4t^2-12t+14\ (24)$\phantom{$\Big\vert$} & 0.953 &	3.909 &	6.007 &	8.588\\
\hline
$4t^2-12t+18\ (28)$\phantom{$\Big\vert$} & 9.066 &	50.071 &	82.592 &	112.237\\

\end{tabular}

\bigskip

\begin{tabular}{c | c | c | c | c}
Ideals\phantom{$\Big\vert$} & $n=5$ & $n=10$ & $n=15$ & $n=20$ \\
\hline
$2t^2+8t-46\ (16)$\phantom{$\Big\vert$} & 834 & 38 &	38 &	 38 \\	
\hline
$2t^2+8t-42\ (20)$\phantom{$\Big\vert$} & 481 &	 670 &	671 &	671 \\
\hline
$2t^2+8t-38\ (24)$\phantom{$\Big\vert$} & 4774 &	8393 & 8476 & 8476 \\
\hline
$4t^2-12t+10\ (20)$\phantom{$\Big\vert$} & 631 &	 856	& 857	& 857 \\
\hline
$4t^2-12t+14\ (24)$\phantom{$\Big\vert$} & 6394 &	10986 &	11082 &	11082 \\
\hline
$4t^2-12t+18\ (28)$\phantom{$\Big\vert$} & 51527 &	112852 &	115295& 115332\\
\end{tabular}
\end{center}
\caption{\label{tab:deg2} Experiments with Hilbert polynomials of degree 2. The Gotzmann number is reported in brackets.}
\end{table}

\bibliographystyle{amsplain}

\providecommand{\bysame}{\leavevmode\hbox to3em{\hrulefill}\thinspace}
\providecommand{\MR}{\relax\ifhmode\unskip\space\fi MR }
\providecommand{\MRhref}[2]{%
  \href{http://www.ams.org/mathscinet-getitem?mr=#1}{#2}
}
\providecommand{\href}[2]{#2}

\end{document}